\newtheorem{theorem}{Theorem}
\newtheorem{lemma}[theorem]{Lemma}
\theoremstyle{definition}
\newtheorem{preass}{Definition}
\begin{document}
\title[Exploring HP's designs using GSA]{Exploring Hoover and Perez's
experimental designs \\
using global sensitivity analysis}
\author[Becker, Paruolo \& Saltelli]{William Becker$^\circ$}
\author[]{Paolo Paruolo$^{\circ }$}
\author[]{Andrea Saltelli$^{\circ }$}
\thanks{$^{\circ }$
Econometrics and Applied Statistics Unit,
European Commission, DG JRC IPSC,
Via E.Fermi 2749, I-21027 Ispra (VA), Italy;
emails: \texttt{william.becker@jrc.ec.europa.eu}, \texttt{paolo.paruolo@jrc.ec.europa.eu}, \texttt{andrea.saltelli@jrc.ec.europa.eu}.
}
\date{\today.
}
\keywords{Model selection, Sensitivity Analysis, Simulation, Monte Carlo}

\begin{abstract}
This paper investigates variable-selection procedures in regression that
make use of global sensitivity analysis. The approach is combined with
existing algorithms and it is applied to the time series regression designs
proposed by Hoover and Perez. 
A comparison of an algorithm employing global sensitivity analysis and the
(optimized) algorithm of Hoover and Perez shows that the former
significantly improves the recovery rates of original specifications.%
\end{abstract}

\maketitle



\begingroup
\endgroup


\section{Introduction}

Model selection in regression analysis is a central issue, both in theory
and in practice. A partial list of statistical fields with a non-empty
intersection with model selection includes multiple testing, see e.g. \cite%
{Romano2005} and \cite{Bittman2009}, pre-testing, see \cite{Leeb2006},
information criteria, see \cite{Hjort2003} and \cite{liu11}, model selection
based on Lasso, see e.g. \cite{Brunea2008}, model averaging, see \cite%
{Claeskens2003}, stepwise regression, see \cite{Miller2002}, risk inflation
in prediction, see \cite{Foster1994}, directed acyclic graphs and causality
discovery, see e.g. \cite{Freedman1999}.\footnote{%
Model selection is also associated with current rules of thumb on the
maximum number of regression parameters to consider. This literature appears
to have been initiated by \cite{Freedman1983}, who considered the case of a
first screening regression with 50 regressors and 100 data points, where
regressors that are significant at 25\% significance level are kept in a
second regression. Freedman showed that the second regression is troublesome
when one acts as if the screening regression had not been performed and the
ratio of number of observations to number of regressors in the screening
regression is kept in a fixed proportion as the number of observations
diverges. This study was followed by \cite{Freedman1989, Freedman1992}, who
defined the rule of thumb that the ratio of the number of observations per
regressor should be at leat equal to 4; this rule is included in \cite%
{Harrell2001}, who suggested to have it at least equal to 10.}

Model choice is also of primary concern in many areas of applied
econometrics, as witnessed for example by the literature on growth
regression, see e.g. \cite{Sala1997}. Controlling for the right set of
covariates is central in the analysis of policy impact evaluations; this is
embodied in the assumption of unconfoundedness, see e.g. \cite{Imbens2009}.
In economic forecasting, model selection is the main alternative to model
averaging, see e.g. \cite{Hjort2003}.

The analysis of the effects of pre-testing on parameter estimation has a
long tradition in econometrics, see \cite{Danilov2004} for a recent account;
in this context \cite{Magnus1999} and co-authors proposed the weighted
average least squares estimator (WALS), and compared it with model averaging
for growth empirics, see \cite{Magnus2010}.

Model selection is a major area of investigation also in time-series
econometrics, see e.g. \cite{Phillips1997,Phillips2003}. The so-called
London School of Economics (LSE) methodology has played a prominent role in
this area, advocating the \emph{general-to-specific} (GETS) approach to
model selection, see \cite{Hendry2005}, \cite{castle_evaluating_2011} and
references therein. In a widely cited paper,
\cite{hoover99} (hereafter HP) `mechanized' -- i.e. translated -- the GETS
approach into an algorithm for model selection; they then tested the
performance of the HP\ algorithm on a set of time-series regression designs,
constructed along the lines of \cite{Lovell1983}.

Model selection is also related to the issue of regression coefficients'
robustness (i.e. lack of sensitivity) to the omission/inclusion of
additional variables. \cite{leamer1983} has proposed extreme bound analysis,
i.e. to report the range of possible parameter estimates of the coefficient
of interest when varying the additional regressors included in the analysis,
as an application of sensitivity analysis to econometrics. Other
applications of sensitivity analysis to econometrics include the local
sensitivity to model misspecification developed in \cite{Magnus2007a} and
\cite{Magnus2007b}.\footnote{%
They show that local sensitivity measures provide complementary information
with respect to standard diagnostic tests for misspecification, i.e. that
the two types of statistics are asymptotically independent. In SA a local
measure of sensitivity is one focused on a precise point in the space of the
input factor, a.g. a partial derivative of the output versus the input. With
a global measure of sensitivity the influence of a given input on the output
is averaged both on the distribution of the input factor itself and on the
distributions of all the remaining factors, see \cite{saltelli1993}.}

Sensitivity analysis originated in the natural sciences, and it is generally
defined as `the study of how the uncertainty in the output of a mathematical
model or system (numerical or otherwise) can be apportioned to different
sources of uncertainty in its inputs', see \cite{Saltelli2002ra}.
Deterministic models based on knowledge of the physical laws governing the
system are usually fruitfully applied in physical sciences. \cite{Box2007}\
advocate their use\ (in combination with statistical models) because they
(i) contribute to the scientific understanding of the phenomenon under
study, (ii) provide a better basis for extrapolation with respect to
empirical models\ (iii) they tend to be parsimonious (i.e, frugal) in the
use of parameters. The combined use of deterministic and stochastic models
is also advocated in other non-experimental fields, such as in environmental
modeling, see \cite{Young1996}.

Despite several uses of sensitivity in econometrics, the present authors are
not aware of systematic applications of the techniques of Global Sensitivity
Analysis,\ GSA, see \cite{saltelli1993}, to the problem of model selection
in regression. The present paper attempts a first experimental exploration
of the possible application of GSA to model selection in time-series\
regression. Here we aim to answer the question: \textquotedblleft Can GSA
methods help in model selection in practice?\textquotedblright. This
question is answered in the affirmative, using the `total sensitivity index'
to rank regressors' importance in order to construct relevant subsets of
models.

For simplicity and in order to increase replicability of our exploration, we
have chosen to compare new tools and old via simple Monte Carlo (MC) methods.%
\footnote{%
See however analytical results on the properties of the GSA-based algorithm
in Appendix A.} We have chosen to replicate the original search algorithm in
HP as a benchmark, and to compare our GSA algorithm with the original HP
algorithm. Because the purpose of the paper is to investigate the
contribution of GSA methods, we have abstained from the implementation of
any other algorithms except the original -- albeit optimized -- HP algorithm
as a benchmark.

The choice of the design of experiments in HP reflects the current practice
in single-equation, time-series econometric models; these consist in a
possibly dynamic regression equation with exogenous variables, where the
exogenous variables are fixed across experiments and are taken from
real-world, stationary, macroeconomic time series. While HP's designs are
supposed to represent prototypical configurations in time-series
econometrics, they contain by construction only a small subset of possible
situations encountered in econometric time-series applications. As such, it
is like a single planet in a galaxy.\

As forbidding as the exploration of a galaxy is (at least with current
means), so is the attempt to investigate all regression designs. In this
paper we have hence decided to explore a limited part of this galaxy -- a
single planet -- namely HP's experimental designs.%

Several papers appear to have applied other methods to HP's designs, see
\cite{Hendry1999}, \cite{castle_evaluating_2011}.\ The choice of HP's
designs and of the HP algorithm as benchmark allows to compare performances
in our paper with others reported in the literature. The designs in HP's
designs include data generating processes (DGPs) of varying degree of
difficulty (for model search algorithms) and a single sample size of 139
time periods, close to the ones available in typical macroeconomic
applications with quarterly data.

The features of HP's designs prompt a number of considerations. First,
because sample size is limited and fixed, consistency of model-selection
algorithms cannot be the sole performance criterion. In this light, it would
be helpful to be able to describe the complete finite sample properties of
model-selection algorithms for HP's designs; the MC approach taken in the
paper allows to do this.

Secondly, some of the DGPs in HP's designs are characterized by a low
signal-to-noise ratio for some coefficients; we call the corresponding
regressors `weak'. This situation makes it very difficult for statistical
procedures to discover if the corresponding regressors should be included or
not. This raises the question of how to measure selection performance in
this context.

In this paper we observe that, in the case of weak regressors, one can
measure performance of model-selection algorithms also with respect to a
simplified DGP, which contains the subset of regressors with sufficiently
high signal-to-noise ratio; we call this the `Effective DGP', EDGP. In this
paper we make the definition of the EDGP operational using the
`parametricness index' recently introduced in \cite{liu11}.

Overall, results point to the possible usefulness of GSA methods in model
selection algorithms. In particular, GSA methods seem to complement existing
approaches, as they give a way to construct viable search paths (via
ordering of regressors) that are complementary to the ones based on $t\,\,$%
ratios. When comparing optimized algorithms, the GSA method appears to be
able to reduce the failure rate in recovering the EDGP from 5\% to 1\%
approximately -- a five-fold reduction. When some of the regressors are
weak, the recovery of exact DGPs does not appear to be improved by the use
of GSA methods.

Selection of a subset of all possible models from the space of all submodels
is one of the critical aspects also for model selection based on information
criteria, see Section 5.2. in \cite{Hansen1999}. A similar remark applies
for multi-model inference procedures, see e.g. \cite{Burnham2002}. The
results obtained in this paper show that GSA methods have potential to make
these methods operational. Due to space limitations, we leave the analysis
of these extensions %
to future research.

The rest of the paper is organized as follows. In Section \ref%
{sec_problem_definition} we define the problem of interest and introduce
HP's data generating processes and the HP algorithm. Section \ref{sec:GSA}
defines the tools from GSA used in the paper, while Section \ref%
{sec:algorithm} presents the GSA algorithm. Results are reported in Section %
\ref{sec:results}, and Section \ref{sec:conclusions} concludes. Large-sample
properties of the orderings based on the GSA algorithm are discussed in
Appendix A. A discussion about the identifiability of DGPs and the
definition of EGDP is reported in Appendix B. Finally, this paper follows
the notational conventions in \cite{abadir02}.

\section{Problem definition}

\label{sec_problem_definition}This section presents the setup of the problem
and describes the design of experiments in HP, as well as their algorithm.

\subsection{Model selection in regression}

\label{subsec_problem}Let $n$ be the number of data points and $p$ the
number of regressors in a standard multiple regression model of the form%
\begin{equation}
\vy=\mX_{1}\beta _{1}+\dots \mX_{p}\beta _{p}+\vvarepsilon=\mX\vbeta+%
\vvarepsilon  \label{eq_DGP1}
\end{equation}%
where $\vy=(y_{1},\dots ,y_{n})^{\prime }$ is $n\times 1$, $\mX=(\mX%
_{1},\dots ,\mX_{p})$ is $n\times p$, $\mX_{i}:=(x_{i,1},\dots
,x_{i,n})^{\prime }$ is $n\times 1$, $\vbeta=(\beta _{1},\dots ,\beta
_{p})^{\prime }$ is $p\times 1$ and $\vvarepsilon$ is is a $n\times 1$
Gaussian random vector with distribution $N(\vzeros,\sigma ^{2}\mI_{n})$.
The symbol\ $^{\prime }$ indicates transposition.

Let $\Gamma $ be the set of all $p\times 1$ vectors of indicators
$\vgamma=(\gamma _{1},\dots ,\gamma _{p})^{\prime }$, 
with $\gamma _{i}=0$ or $1$ for $i=1,\dots ,p$, i.e. $\Gamma =\{0,1\}^{p}$.
A submodel of (\ref{eq_DGP1}) (or one specification) corresponds to one
vector $\vgamma\in \Gamma $, where $\gamma _{i}=0$ (respectively 1)
indicates that $\beta _{i}$ is to be estimated equal to 0 (respectively
unrestrictedly).
Note that there are $2^{p}$ different specifications, i.e. $\vgamma$ vectors
in $\Gamma $. When $p=40$ as in HP's designs, the number of specifications $%
2^{p}\approx 1.0995 \cdot 10^{12}$ is very large.

In the following we indicate by $\vbeta_{0} =(\beta _{0,1},\dots ,\beta
_{0,p})^{\prime }$ the true value of $\vbeta$. Define also $\vgamma
_{0}=(\gamma _{0,1},\dots ,\gamma _{0,p})^{\prime }$ with $\gamma
_{0i}=1(\beta _{0,i}\neq 0)$, where $1(\cdot )$ denotes the indicator
function and $\beta _{0,i}$ are the true parameters in (\ref{eq_DGP1}). The
vector of indicators $\vgamma_{0}$ defines the smallest true submodel; this
is called the \emph{Data Generating Process} (DGP) in the following.

The least squares estimator of $\vbeta $ in model $\vgamma$ can be written
as follows:
\begin{equation}
\widehat{\vbeta}_{\vgamma}=\left( \mD_{\vgamma}\mX^{\prime }\mX \mD_{\vgamma%
}\right) ^{+}\mD_{\vgamma}\mX^{\prime }\vy,  \label{eq_OLS}
\end{equation}%
where $\mD_{\vgamma}=\mathrm{diag}(\vgamma)$ is the $p\times p$ matrix with
diagonal elements $\vgamma$ and $\mA^{+}$ indicates the Moore-Penrose
generalized inverse of the matrix $\mA$. The non-zero elements in $\widehat{%
\vbeta}_{\vgamma}$ correspond to the least squares estimates in the submodel
which includes only regressors $\mX_{i}$ for which $\gamma _{i}=1$.

The case of $\vgamma$ equal to $\vones$, a vector with all 1s, is called the
\emph{General Unrestricted Model}, the GUM in HP. The problem of interest
is, given the observed data, to find $\vgamma_{0}$, i.e. to identify the DGP.%
\footnote{%
All empirical models are assumed to contain the constant; this is imposed
implicitly by de-meaning the $\vy$ and $\mX_{i}$ vectors. Hence in the
following, the `empty set of regressors' refers to the regression model with
only the constant.}

In this paper we assume that the model is correctly specified, i.e. that $%
\vgamma_{0}$ is an element of $\Gamma $. This is a common hypothesis in the
regression literature. In econometrics this assumption appears be more
questionable, because of the possibility of relevant omitted variables.
However, we maintain it here 
for reasons of simplicity.

\subsection{HP's designs}

\label{subsec_HPworld}HP's designs are constructed as follows. Following
\cite{Lovell1983}, HP chose a set of 18 major US quarterly macroeconomic
variables. Only two variables considered in \cite{Lovell1983} were discarded
in HP, namely the linear trend and the `potential level of GNP in \$1958',
because they were no longer relevant or available. Unlike in \cite%
{Lovell1983}, HP applied 0, 1 or 2 differences to the data; the order of
differencing was selected by HP in order to obtain stationary variables
according to standard unit root tests, see their Table 1.

The values of these (differenced) 18 major US quarterly macroeconomic series
are then fixed in HP's designs; they are here indicated as $x_{it}^{\ast }$,
where $t=1,\dots ,n$ indicates quarters and $i=1,\dots ,k$, with $k=18$
indexes variables. The values of $y_{t}$ were then generated by the
following scheme
\begin{equation}
y_{t}=\sum_{i=1}^{k}\beta _{i}^{\ast }x_{it}^{\ast }+u_{t}\qquad u_{t}=\rho
u_{t-1}+\varepsilon _{t},  \label{eq_yseries1}
\end{equation}%
where $\varepsilon _{t}$ are i.i.d. $N(0,\sigma ^{2})$. Here $\beta
_{i}^{\ast }$ for $i=1,\dots ,k$ and $\sigma ^{2}$ are known constants,
which define the DGP. In practice $\varepsilon _{t}$s are simulated using a
computer random number generator, $u_{t}$ is then calculated as an
autoregressive series of order 1, AR(1), with coefficient $\rho $. $u_{t}$
is then fed into the equation for $y_{t}$, where $x_{it}^{\ast }$ are kept
fixed and do not change across replications.

It is useful to express \eqref{eq_yseries1} as a special case of %
\eqref{eq_DGP1}. To this end one can substitute $(y_{t}-\sum_{i=1}^{k}\beta
_{i}^{\ast }x_{it}^{\ast })$ in place of $u_{t}$ in the dynamic equation of $%
u_{t}$; one hence finds the following equivalent representation of the DGP
\begin{equation}  \label{eq_yseries2}
y_{t}=\rho y_{t-1}+\sum_{i=1}^{2k}\beta _{i}x_{it}+\varepsilon _{t}
\end{equation}%
where $\beta _{i}=\beta _{i}^{\ast }$ and $x_{it}=x_{it}^{\ast }$ for $%
i=1,\dots ,k$ while $\beta _{i}=-\rho \beta _{i}^{\ast }$ and $%
x_{it}=x_{it-1}^{\ast }$ for $i=k+1,\dots ,2k$. This representation is in
the form \eqref{eq_DGP1}, and the parameters can be estimated as in %
\eqref{eq_OLS}.

Regressions in HP were performed setting the elements $x_{i,t}$ in column $%
\mX_{i}$ equal to variable $x_{it}$ from \eqref{eq_yseries2}, for $i=1,\dots
,2k$ with $2k=36$, and setting the elements $x_{i,t}$ of the remaining
columns $\mX_{i}$ for $i=2k+1,\dots ,p$, i.e. from 37 to 40, equal to the
first, second, third and fourth lag of $y_{t}$. Therefore, 4 lags were
always considered in estimation (even if only one lag was possibly present
under the DGP), and the only part of the $\mX$ that changes across
replications is the last 4 columns.

HP defined 11 experimental designs (DGPs)\ by choosing values for the
parameters $\rho $, $\beta _{i}^{\ast }$ and $\sigma _{\varepsilon }^{2}$.
Table \ref{Table_design} summarizes the chosen parameter values. The choice
of these values was made to reflect the coefficient estimates obtained on US
data, using personal consumption expenditure as dependent variable,
following the rationale in \cite{Lovell1983}. Because they were chosen as
explanatory variables for a consumption equation, not all the macroeconomic
time series were included in the DGP;\ in particular only (the second
differences of the) Government purchases on goods and services $G$ and the
(first differences of the) $M1$ monetary aggregate, and their respective
first lags, were included in the designs.
\begin{table}[tbp] \centering%
\begin{tabular}{lccccccccccc}
\hline
DGP & 1 & 2 & 3$^{\natural}$ & 4 & 5 & 6 & 6A & 6B & 7 & 8 & 9 \\ \hline
&  & \multicolumn{8}{c}{coefficients in DGP} &  &  \\ \hline
$y_{t-1}$ &  & 0.75 & 0.395 &  &  &  &  &  & 0.75 & 0.75 & 0.75 \\
$y_{t-2}$ &  &  & 0.3995 &  &  &  &  &  &  &  &  \\
$G_{t}$ &  &  &  &  & -0.046 & -0.023 & -0.32 & -0.65 &  & -0.046 & -0.023
\\
$G_{t-1}$ &  &  &  &  &  &  &  &  &  & 0.00345 & 0.01725 \\
$M1_{t}$ &  &  &  & 1.33 &  & 0.67 & 0.67 & 0.67 & 1.33 &  & 0.67 \\
$M1_{t-1}$ &  &  &  &  &  &  &  &  & -0.9975 &  & -0.5025 \\ \hline
$\sigma _{\varepsilon }$ & 130 & 85.99 & 0.00172 & 9.73 & 0.11 & 4.92 & 4.92
& 4.92 & 6.73 & 0.073 & 3.25 \\ \hline
\end{tabular}%
\vspace{0.3cm}
\caption{DGPs design. $y_{t-j}$ indicates lags of the dependent variable, $G_{t-j}$ denotes (lags of)
second differences of government purchases of goods and services
and $M1_{t-j}$ indicates (lags of) first differences of M1.\\
$^\natural$: in DGP 3 the regression analysis is performed on
$y_t^\ast=\exp (y_t)$, where $y_t$ is simulated as in \eqref{eq_yseries2}.
}\label{Table_design}%
\end{table}%

\subsection{HP algorithm}

HP proposed an algorithm that aims to provide a close approximation to a
subset of what practitioners of the LSE approach actually do. Here we follow
\cite{Hansen1999} in his description of the HP algorithm.

The HP algorithm can be described by a choice of a triplet $(R,f,\Gamma
_{s}) $ composed of (i) a test procedure $R$, (ii) a measure of fit $f$ and
(iii) a subset $\Gamma _{s}$ of all models $\Gamma $, $\Gamma _{s}\subseteq
\Gamma $. For any model $\vgamma$, the test procedure $R$ is defined as%
\begin{equation}
R(\vgamma)=1(\min_{1\leq \ell \leq v}p_{\ell }\leq \alpha )
\label{eq_test_Stat}
\end{equation}%
where $p_{\ell }$ are the $p$-values of $v$ specification tests and $\alpha $
is the chosen significance level.
Note that $R(\vgamma)=0$ when all $v$ tests do not reject the null, which
corresponds to the hypothesis of correct specification and/or constant
parameters.\footnote{%
The tests are the following:\ (1) Jarque Bera test for normality of
residuals;\ (2) Breusch Pagan residual autocorrelation tests;\ (3)\ Engle's
ARCH test on residuals;\ (4) Chow sample-split parameter stability tests;\
(5) Chow out-of-sample stability test using the first 90\% of observations
versus the last 10\%;\ (6) $F$ test of the restrictions imposed by model $%
\gamma $ versus the GUM. The tests are performed on the first 90\% of
observations during the search.}

HP's measure of fit $f$ is based on the least-square estimate of $\sigma
^{2} $, the regression variance, which equals $\widetilde{\sigma }_{\vgamma%
}^{2}:=\frac{1}{n-k_{\vgamma}}\widehat{\vvarepsilon}_{\vgamma}^{\prime }%
\widehat{\vvarepsilon}_{\vgamma}$, where $k_{\vgamma}$ and $\widehat{%
\vvarepsilon}_{\gamma }$ are the number of regressors and the residuals in
model $\vgamma$. HP's measure of fit is $f(\vgamma)=\widetilde{\sigma }_{%
\vgamma}$, which should be minimized. Finally the subset $\Gamma _{s}$ is
selected recursively, going from general to specific models, starting from
the GUM, $\vgamma=\vones_{p}$; the recursion continues as long as $R(\vgamma%
)=0$. Details on HP's choice of $\Gamma _{s}$ are given in the next section.

Overall the HP algorithm selects a model $\widehat{\vgamma}$ as the
preferred model using the rule%
\begin{equation*}
\widehat{\vgamma}=\underset{\vgamma\in \Gamma _{s}:R(\vgamma)=0}{\arg \min }%
f(\vgamma).
\end{equation*}%
The above description shows that the HP algorithm depends on $\alpha $,
which is a tuning parameter, as well as on the choice of specific path $%
\Gamma _{s}$. For large $n$, \cite{Hansen1999} noted that $\widehat{\vgamma}$
corresponds approximately to minimizing the information criterion\footnote{%
Here the only approximation involved in the large $T$ argument is $\log
(1+k_{\gamma }/T)\approx k_{\gamma }/T$.} $HP(\vgamma)=\log \widehat{\sigma }%
_{\vgamma}^{2}+k_{\vgamma}/n$, where $\widehat{\sigma }_{\vgamma}^{2}:=\frac{%
1}{n}\widehat{\vvarepsilon}_{\vgamma}^{\prime }\widehat{\vvarepsilon}_{%
\vgamma}$ is the ML estimator of $\sigma ^{2}$. This differs from Akaike's
Information Criterion $AIC(\vgamma)=\log \widehat{\sigma }_{\vgamma}^{2}+2k_{%
\vgamma}/n$ and from the Bayesian Information criterion of Schwarz $BIC(%
\vgamma)=\log \widehat{\sigma }_{\vgamma}^{2}+k_{\vgamma}\log (n)/n$ by the
different choice of penalty term.\footnote{%
Remark that information criteria are equivalent to LR testing with a tunable significance level; see for instance \cite{Poetscher1991}.}

\subsection{A subset of models}

The number of models in $\Gamma $ is too large to visit all submodels;\
hence any selection method needs to select at most a subset $\Gamma _{s}$ of
$\Gamma $. This is a critical aspect of the HP algorithm, as well as of any
selection method based e.g. on information criteria, see Section 5.2. in
\cite{Hansen1999} and \cite{Burnham2002}.

In particular, HP select a subset $\Gamma _{s}$ as follows. All paths start
from the GUM regression, and the regressors are ranked in ascending order
according the their $t$-statistics. The 10 lowest variables in this list are
then candidates for elimination; this starts an iterative elimination path.
Each candidate model $\vgamma_{\ast }$ then becomes the current
specification provided $R(\vgamma_{\ast })=0$. In this stage, the first 90\%
of the observations are used in the specification tests. Each search is
terminated when for any choice of regressor the test $R\ $rejects.

At this final stage, the HP algorithm reconsiders all the observations in a
`block search'; this consists in considering the joint elimination of all
the regressors with an insignificant $t$-statistics. If the $R$ tests for
the block search does not reject, the resulting model becomes the terminal
specification. Otherwise, the specification that entered the final stage
becomes the terminal specification. Once all 10 search paths have ended in a
terminal specification, the final specification is the one among these with
lowest $f(\vgamma)=\widetilde{\sigma }_{\vgamma}$.

This paper gives a contribution on the selection of $\Gamma _{s}$,
by defining a GSA-based ordering of regressors.

\subsection{Measures of performance}

\label{subsec_measures}

The performance of algorithms was measured by HP via the number of times the
algorithm selected the DGP as a final specification. Here we describe
measures of performance similar to the ones in HP, as well as additional
ones proposed in \cite{castle_evaluating_2011}.

Recall that $\vgamma_{0}$ is the true set of included regressors and let $%
\widehat{\vgamma}_{j}$ indicate the one produced by a generic algorithm in
replication $j=1,\dots, N_R$. 
Here we define $r_{j}$ to be number of correct inclusions of components$\ $%
in vector $\widehat{\vgamma}_{j}$, i.e. the number of regression indices $i$
for which $\widehat{\gamma }_{j,i}=\gamma _{0,i}=1$, $r_{j}=\sum_{i=1}^{p}1(%
\widehat{\gamma }_{j,i}=\gamma _{0,i}=1)$. Similarly, we let $r_{0}$
indicate the number of true regressors.

We then define the following exhaustive and mutually exclusive categories of
results:

\begin{enumerate}
\item[$C_1$:] exact matches;

\item[$C_2$:] the selected model is correctly specified, but it is larger
than necessary, i.e. it contains all relevant regressors as well as
irrelevant ones;

\item[$C_3$:] the selected model is incorrectly specified (misspecified),
i.e. it lacks relevant regressors.
\end{enumerate}

$C_{1}$ matches correspond to the case when $\widehat{\vgamma}_{j}$
coincides with $\vgamma_{0}$;\ \ the corresponding frequency $C_{1}$ is
computed as $C_{1}=\frac{1}{N_{R}}\sum_{j=1}^{N_{R}}1(\widehat{\vgamma}_{j}=%
\vgamma_{0})$. The frequency of $C_{2}$ cases is given by $C_{2}=\frac{1}{%
N_{R}}\sum_{j=1}^{N_{R}}1(\widehat{\vgamma}_{j}\neq \vgamma_{0},r_{j}=r_{0})$%
. Finally, $C_{3}$ cases are the residual category, and the corresponding
frequency is $C_{3}=1-C_{1}-C_{2}$.\footnote{$C_{1}$ corresponds to Category
1 in HP;\ $C_{2}$ corresponds to Category 2+Category 3$-$Category 1 in HP;\
finally $C_{3}$ corresponds to Category 4 in HP.}

The performance can be further evaluated through measures taken from \cite%
{castle_evaluating_2011}, known as potency and gauge. First the retention
rate $\widetilde{p}_{i}$ of the $i$\textbf{-}th variable is defined as, \ $%
\widetilde{p}_{i}=\frac{1}{N_{R}}\sum_{j=1}^{N_{R}}1(\widehat{\gamma }%
_{j,i}=1)$.$\ $Then, potency and gauge are defined as follows:%
\begin{equation}
\text{potency}=\frac{1}{r_{0}}\sum_{i:\beta _{0,i}\neq 0}\widetilde{p}%
_{i},\qquad \text{gauge}=\frac{1}{p-r_{0}}\sum_{i:\beta _{0,i}=0}\widetilde{p%
}_{i},  \notag
\end{equation}%
where $r_{0}$ indicates the number of true regressors in the DGP.

Potency therefore measures the average frequency of inclusion of regressors
belonging to the DGP, while gauge measures the average frequency of
inclusion of regressors not belonging to the DGP. An ideal performance is
thus represented by a potency value of 1 and a gauge of 0.

In calculating these measures, HP chose to discard MC samples for which a
preliminary application of the battery of tests defined in (\ref%
{eq_test_Stat}) reported a rejection.\footnote{%
We found the empirical percentage of samples that were discarded in this way
was proportional to the significance level $\alpha$. This fact, however, did
not influence significantly the number of $C_1$ catches. We hence decided to
let the HP procedure discard sample as in the original version. For the GSA
algorithm we did not apply any pre-search elimination.} We call this choice
`pre-search elimination' of MC samples.

\subsection{Benchmark}

In this paper we take the performance of HP's algorithm as a benchmark. The
original \texttt{MATLAB} code for HP designs and the HP algorithm was
downloaded from HP's home page.\footnote{\texttt{%
http://www.csus.edu/indiv/p/perezs/Data/data.htm}} The original scripts were
then updated to run on the current version of \texttt{MATLAB}. A replication
of the results in Tables 4, 6 and 7 in HP is reported in the first panel of
Table~\ref{tab:results_HP}, using a nominal significance level of $\alpha
=1\%,5\%,10\%$ and $N_R=10^{3}$ replications. The results do not appear to
be significantly different from the ones reported in HP.

We then noted an incorrect coding in the original HP script for the
generation of the AR series $u_{t}$ in eq. (\ref{eq_yseries1}), which
produced simulations of a moving average process of order 1, MA(1), with MA
parameter $0.75$ instead of an AR(1) with AR parameter $0.75$.\footnote{%
This means that the results reported in HP for DGP 2, 3, 7, 8, 9 concern a
design in which the entertained model is misspecified. The MA process can be
inverted to obtain a AR($\infty $) representation; substituting from the $%
y_{t}$ equation as before, one finds that the DGP contains an infinite
number of lags on the dependent variable and of the $x_{it}^{\ast }$
variables, with exponentially decreasing coefficients. The entertained
regression model with 4 lags on the dependent variable and 2 lags on the $%
x_{it}^{\ast }$ variables can be considered an approximation to the DGP.} We
hence modified the script to produce $u_{t}$ as an AR(1) with AR parameter $%
0.75$; we call this the `modified script'. Re-running the experiments using
this modified script we obtained the results in the second panel in Table~%
\ref{tab:results_HP}; for this set of simulations we used $N_{R}=10^{4}$
replications. Comparing the first and second panels in the table for the
same nominal significance level $\alpha $, one observes a significant
increase in $C_{1}$ catches in DGP 2 and 7. One reason for this can be that
when the modified script is employed, the regression model is
well-specified, i.e. it contains the DGP as a special case.\footnote{%
This finding is similar to the one reported in \cite{Hendry1999}, section 6;
they re-run HP design using PcGets, and they document similar increases in $%
C_{1}$ catches in DGP 2 and 7 for their modified algorithms. Hence, it is
possibile that this result is driven by the correction of the script for the
generation of the AR series.}

The original results in HP and those obtained with the modified script in
Table~\ref{tab:results_HP} document how HP's algorithm depends on $\alpha $,
the significance level chosen in the test $R$ in (\ref{eq_test_Stat}).

\begin{table}[tbp] \centering%
\begin{tabular}{l|rrr|rrr}
\hline
& \multicolumn{3}{c}{original script} & \multicolumn{3}{|c}{modified script}
\\ \hline
DGP & $\alpha =0.01$ & 0.05 & 0.1 & 0.01 & 0.05 & 0.1 \\ \hline
1 & 81.1 & 28.6 & {6.8} & 79.3 & 30.0 & 7.3 \\
2 & 1.2 & 0.0 & {0.0} & 77.0 & 27.0 & 6.9 \\
3 & 71.4 & 27.2 & {9.1} & 71.7 & 27.3 & 6.9 \\
4 & 78.2 & 31.2 & {6.4} & 81.8 & 31.1 & 7.0 \\
5 & 80.9 & 30.1 & {7.4} & 80.7 & 29.9 & 6.4 \\
6 & 0.2 & 1.0 & {0.7} & 0.4 & 0.5 & 0.6 \\
6A & 68.0 & 27.8 & {7.8} & 70.6 & 27.8 & 7.6 \\
6B & 80.8 & 30.7 & {7.8} & 81.1 & 31.4 & 8.0 \\
7 & 23.6 & 4.7 & {0.3} & 75.7 & 26.7 & 7.6 \\
8 & 80.6 & 31.0 & {8.0} & 79.2 & 30.3 & 9.4 \\
9 & 0.1 & 0 & {0} & 0 & 0 & 0 \\ \hline
\end{tabular}%
\vspace{0.5cm}%
\caption{Percentages of Category 1 matches $C_1$ for different values of $\alpha$.
Original script: data generated by the original script, $N_R=10^3$ replications.
The frequencies are not statistically different from the ones reported in HP tables
4, 6, 7. Modified script:
data from modified script for the generation of AR series, $N_R=10^4$ replications.
}\label{tab:results_HP}%
\end{table}%

\section{GSA approach}

\label{sec:GSA}The HP algorithm uses $t$-ratios to rank regressors in order
of importance, in order to select a subset of model $\Gamma _{s}$. In this
study we propose to complement 
the $t$-ratios with a GSA measure, called the `total sensitivity index'. An
algorithm is then developed which combines this new ranking with the ranking
by $t$-statistics; we call this the `GSA algorithm'. Following HP, we define
a testing sequence based on this new ranking. Unlike in HP, we adopt a
`bottom-up' selection process which builds candidate models by adding
regressors in descending order of importance; this `bottom-up' selection
process has better theoretical properties, see e.g. \citet{Paruolo2001b},
and can still be interpreted as a GETS procedure. In this section we
introduce the total sensitivity index; the description of the GSA algorithm
is deferred to Section \ref{sec:algorithm}.

The total sensitivity index in GSA is based on systematic MC exploration of
the space of the inputs, as is commonly practiced in mathematical modeling
in natural sciences and engineering. The `mechanistic' models in these
disciplines are mostly principle-based, possibly involving the solution of
some kind of (differential) equation or optimization problem, and the output
- being the result of a deterministic calculation - does not customarily
include an error term. Reviews of global sensitivity analysis methods used
therein are given in \cite{helton06,santner03,saltelli2012ChemRev}.\footnote{%
A recent application of these methods to the quality of composite indicators
is given in \cite{Paruolo2013}.}

These techniques are applied here conditionally on the sample $\mZ=(\vy,\mX)$
generated as in Section \ref{subsec_HPworld}. Conditionally on the sample $%
\mZ$, we consider a measure of model fit, such as an information criterion,
indicated as $q\left( \vgamma\right) $. In the application we take $q\left( %
\vgamma\right) $ to be BIC.\footnote{$q(\vgamma )$ is a function of $\mZ$,
but we omit to indicate this in the notation.
} Remark that $q$ is a continuous random variable that depends on the
discretely-valued $\vgamma$ (conditionally on $\mZ$).

GSA\ aims to explore the effect on $q$ when varying inputs $\vgamma$ across
the full hyperspace $\Gamma $ of all possible $\vgamma$ configurations.
Specifically, we employ the total sensitivity index of each variable; this
can be interpreted as a measure of how much the given regressor contributes
to the model fit, as represented by any likelihood-related quantity, such as
BIC.

This sensitivity measure belongs to the class of variance-based methods,
which aim to decompose the variance of $q$ across models into portions
attributable to inputs and sets of inputs. Variance-based measures of
sensitivity are the computer-experiment equivalents of the analysis of the
variance of in experimental design, see \cite{archer97}.\footnote{%
In experimental design the effects of factors are estimated over levels;
instead, variance-based methods explore the entire distribution of each
factor.} The objective is to capture both the main effect and the
interaction effects of the input factors onto the output $q$, see \cite%
{saltelli2012ChemRev}.

\subsection{Sensitivity Measures}

Let $\mathcal{Q}=\{q\left( \vgamma\right) ,\vgamma\in \Gamma \}$ be the set
of all possible values of $q$ varying $\vgamma$ in the set of models $\Gamma
$. Let $P$ be the uniform distribution on $\vgamma\in \Gamma $ and $\mathbb{P%
}$ be the induced probability measure on $q$; let also $\mathbb{E}$, $%
\mathbb{V}$ indicate the expectation and variance operator with respect to $%
\mathbb{P}$. This probability space on the model space is introduced here
only to simplify exposition, and it does not correspond to any ex-ante
probability on the space of models.

We next partition the $\vgamma$ vector into two components $\gamma _{i}$ and
$\vgamma_{-i}$, where $\vgamma_{-i}$ contains all elements in $\vgamma$
except $\gamma _{i}$. We let $\mathbb{E}_{a}(\cdot |b)$ and $\ \mathbb{V}%
_{a}(\cdot |b)$ (respectively $\mathbb{E}_{a}(\cdot )$ and $\ \mathbb{V}%
_{a}(\cdot )$)\ indicate the conditional (respectively marginal) expectation
and variance operators with respect to a partition $(a,b)$ of $\vgamma$,
where $a$ and $b$ are taken equal to $\gamma _{i}$ and to $\vgamma_{-i}$.

Variance-based measures rely on decomposing the variance of the output, $V=%
\mathbb{V}(q)$, into portions attributable to inputs and sets of inputs.
There are two commonly-accepted variance-based measures, the `first-order
sensitivity index' $S_{i}$, \cite{sobol93}, and the `total-order sensitivity
index' $S_{Ti}$, \cite{homma96}.

The first-order index measures the contribution to $V$ of varying the $i$-th
input alone, and it is defined as $S_{i}={\mathbb{V}_{\gamma_{i}}\left(
\mathbb{E}_{\vgamma _{-i}}\left( q\mid \gamma_{i}\right) \right) } / {V}$.
This corresponds to seeing the effect of including or not including a
regressor, but averaged over all possible combinations of other regressors.
This measure does not account for interactions with the inclusion/exclusion
of other regressors; hence it is not used in the present paper.

Instead, we focus here on the total effect index,
which is defined by \cite{homma96} as%
\begin{equation}
S_{Ti}=\frac{\mathbb{E}_{\vgamma_{-i}}\left( \mathbb{V}_{\gamma _{i}}\left(
q\mid \vgamma_{-i}\right) \right) }{V}=1-\frac{\mathbb{V}_{\vgamma%
_{-i}}\left( \mathbb{E}_{\gamma _{i}}\left( q~|~\vgamma_{-i}\right) \right)
}{V}.  \label{eq:STi}
\end{equation}%
In the following we indicate the numerator of $S_{Ti}$ as $\sigma _{Ti}^{2}=%
\mathbb{E}_{\vgamma_{-i}}\left( \mathbb{V}_{\gamma _{i}}\left( q\mid \vgamma%
_{-i}\right) \right) $, and we use the shorthand $S_{T}$ for $S_{Ti}$.

Examining $\sigma _{Ti}^{2}$, one can notice that the inner term, $\mathbb{V}%
_{\gamma_{i}}\left( q\mid \vgamma_{-i}\right) $, is the variance of $q$ due
inclusion/exclusion of regressor $i$, but conditional on a given combination
$\vgamma_{-i}$ of the remaining regressors.
The outer expectation then averages over all values of $\vgamma_{-i}$; this
quantity is then standardised by $V$ to give the fraction of total output
variance caused by the inclusion of $x_{i}$. The second expression shows
that $S_{Ti}$ is 1 minus the first order effect for $\vgamma_{-i}$.

These measures are based on the standard variance decomposition formula, or
`law of total variance', see e.g. \cite{Billingsley1995}, Problem 34.10(b).
In the context of GSA these decomposition formulae are discussed in \cite%
{archer97}, \cite{saltelli02rel}, \cite{sobol93}, \cite{rabitz2010}. For
further reading about GSA\ in their original setting, we refer to \cite%
{saltelli2012ChemRev}.

\subsection{Monte Carlo Estimation}

In order to calculate the total sensitivity measure $S_{Ti}$ one should be
able to compute $q(\vgamma)$ for all $\vgamma\in \Gamma $, which is
unfeasible or undesirable. Instead, $S_{Ti}$ can be estimated by MC,
sampling from the space of inputs $\Gamma $. Here we select $\gamma _{i}\in
\{0,1\}$ with $P(\gamma _{i}=0)=P(\gamma _{i}=1)=0.5$, independently of $%
\vgamma_{-i}$.

This suggests the following MC sampling scheme. Generate a random draw of $%
\vgamma$ from $P$, say $\vgamma_{\ast }$; then consider elements $\vgamma%
_{\ast }^{(i)}$ with all elements equal to $\vgamma_{\ast }$ except for the $%
i$-th coordinate which is switched from 0 to 1 or vice-versa, $\gamma _{\ast
i}^{(i)}=1-\gamma _{\ast i}$. Doing this for each coordinate $i$ generates $%
p $ additional points $\vgamma _{\ast }^{(i)}$, and $p$ pairs of $\vgamma$
vectors, $\vgamma_{\ast }$ and $\vgamma_{\ast }^{(i)}$, that differ only in
the coordinate $i$. This is then used to calculate $g(\vgamma)$ and apply an
ANOVA-like estimation of main effect and residual effects.

More precisely, initialize $\ell$ at $1$; then:

\begin{enumerate}
\item Generate a draw of $\vgamma $ from $P$, where $\vgamma$ is a $p$%
-length vector with each element is randomly selected from $\{0,1\}$. Denote
this by $\vgamma_{\ell}$.

\item Evaluate $q_{\ell}=q(\vgamma_{\ell})$.

\item Take the $i$th element of $\vgamma_{\ell}$, and invert it, i.e.\ set
it to 0 if it is 1, and 1 if it is 0. Denote this new vector with inverted $%
i $th element as $\vgamma_{\ell}^{(i)}$.

\item Evaluate $q_{i\ell}=q(\vgamma_{\ell}^{(i)})$.

\item Repeat steps 3 and 4 for $i=1,2,...,p$.

\item Repeat steps 1-5 $N$ times, i.e.\ for $\ell =1,2,...,N$.
\end{enumerate}

The computational MC estimator for $\sigma _{Ti}^{2}$ and $V~$are defined as
follows, see \cite{saltelli10}, 
\begin{equation}
\hat{\sigma}_{Ti}^{2}=\frac{1}{4N}\sum_{\ell =1}^{N}\left( q_{i\ell
}-q_{\ell }\right) ^{2},\qquad \hat{V}=\frac{1}{N-1}\sum_{\ell =1}^{N}\left(
q_{\ell }-\bar{q}\right) ^{2},  \label{eq:estSTi}
\end{equation}%
where $\bar{q}=\frac{1}{N}\sum_{\ell =1}^{N}q_{\ell }$. This delivers the
following plug-in estimator for $S_{T}$, $\hat{S}_{Ti}=\hat{\sigma}_{Ti}^{2}/%
\hat{V}$. 
Readers familiar with sensitivity analysis may notice that the estimator in (%
\ref{eq:estSTi}) is different by a factor of 2 to the estimator quoted in
\cite{saltelli10}. The reason for this is given in eq. \eqref{eq_STi_BIC} in
Appendix A.\footnote{%
A heuristic reason for this is that the MC method involves a probabilistic
exploration of models, and $P(\gamma _{i}=0)=P(\gamma _{i}=1)=0.5$. Note
that in MC analyses with continuous variables, it is usually advisable to
use low-discrepancy sequences due to their space-filling properties, see
\cite{sobol67}, which give faster convergence with increasing $N$. However,
since $\vgamma$ can only take binary values for each element,
low-discrepancy sequences offer no obvious advantage over (pseudo-)random
numbers.}

We investigate the theoretical properties of ordering of variables based on $%
S_{T}$ in Appendix A; there we show that these orderings satisfy the
following minimal requirement. When the true regressors included in the DGP
and the irrelevant ones are uncorrelated, the ordering of regressors based
on $S_{T}$ separates the true from the irrelevant regressors in large
samples. One may hence expect this result to apply to other more
general situations.\footnote{%
The results in Appendix A show that one can build examples where the
ordering of regressors based on $S_{T}$ fails to separate the sets of true
and irrelevant regressors. The following analysis investigates how often
this happens in HP's designs of experiments.}

We also investigated the contribution of $S_{T}$ in practice, using HP's
experimental designs, as reported in the following section.

\subsection{Ordering variables}

\label{sec:motivation}In order to see how $S_{T}$ can be used as an
alternative or complmentary method of ranking regressors, the following
numerical experiment was performed. For each of the 11 DGPs under
investigation, $N_{R}=500$ samples $\mZ$ were drawn; on each sample,
regressors were ranked by the $t$-test and $S_{T}$, using $N=128$ in (\ref%
{eq:estSTi}). We generically indicate method $m$ for the ordering, where $m$
takes the values $t$ for $t$-test and $S$ for $S_{T}$ orderings. Both for
the $t$-test ranking and the $S_{T}$ ranking, the ordering is from the
best-fitting regressor to the worst-fitting one.

In order to measure how successful the two methods were in ranking
regressors, we defined the following measure $\delta $ of minimum relative
covering size. Indicate by $\varphi _{0}=\{i_{1},\dots ,i_{r_{0}}\}$ the set
containing the positions $i_{j}$ of the true regressors in the list $%
i=1,\dots ,p$; i.e. for each $j$ one has $\gamma _{0,i_{j}}=1$. Recall also
that $r_{0}$ is the number of elements in $\varphi _{0}$. Next, for a
generic replication $j$, let $\varphi _{\ell }^{(m)}=\{i_{1}^{(m)},\dots
,i_{\ell }^{(m)}\}$ be the set containing the first $\ell $ positions $%
i_{j}^{(m)}$ induced by the ordering of method $m$.
Let $b_{j}^{(m)}=\min \{\ell :\varphi _{0}\subseteq \varphi _{\ell }^{(m)}\}$
be the minimum number of elements $\ell $ for which $\varphi _{\ell }^{(m)}$
contains the true regressors. \ We observe that $b_{j}^{(m)}$ is well
defined, because at least for $\ell =p$ one always has $\varphi
_{0}\subseteq \varphi _{p}^{(m)}=\{1,\dots ,p\}$. We define $\delta $ to
equal $b_{j}^{(m)}$ divided by its minimum; this corresponds to the
(relative) minimum number of elements in the ordering $m$ that covers the
set of true regressors.

Observe that, by construction, one has $r_{0}\leq b_{j}^{(m)}\leq p$, and
that ideally one wishes $b_{j}^{(m)}$ to be as small as possible; ideally
one would like to have to have $b_{j}^{(m)}=r_{0}$. Hence for $\delta
_{j}^{(m)}$ defined as $b_{j}^{(m)}/r_{0}$ one has $1\leq \delta
_{j}^{(m)}\leq p/r_{0}$. \ We then compute $\delta ^{(m)}$ as the average $%
\delta _{j}^{(m)}$ over $j=1,\dots ,N_{R}$, i.e. $\delta ^{(m)}=\frac{1}{%
N_{R}}\sum_{j=1}^{N_{R}}\delta _{j}^{(m)}$.

For example, if the regressors, ranked in descending order of importance by
method $m$ in replication $j$, were $x_{3}$, $x_{12}$, $x_{21}$, $x_{11}$, $%
x_{4}$, $x_{31},...$, and the true DGP were $x_{3}$, $x_{11}$ the measure $%
\delta _{j}$ would be 2; in fact the smallest-ranked set containing $x_{3}$,
$x_{11}$ has 4 elements $b_{j}^{(m)}=4$, and $r_{0}=2$.

\begin{table}[tbp] \centering%
\begin{tabular}{l|rrrrrrrrrrr|r}
\hline
DGP & 1 & 2 & 3 & 4 & 5 & 6 & 6A & 6B & 7 & 8 & 9 & Mean \\ \hline
$S_{T}$ &  & 1.00 & 1.01 & 1.00 & 1.00 & 1.00 & 1.12 & 1.02 & \textbf{1.15}
& \textbf{1.64} & \textbf{1.13} & 1.11 \\
$t$-test &  & 1.00 & \textbf{1.53} & \textbf{1.04} & 1.00 & \textbf{1.06} &
\textbf{3.95} & \textbf{1.14} & 1.04 & 1.00 & 1.01 & \textbf{1.38} \\ \hline
\end{tabular}%
\caption{Values of $\protect\delta$ for all DGPs (average over 500 data
replications per DGP), using t-test and $S_{T}$. Mean refers to average across DGPs.
Comparatively poor rankings are in boldface.}\label{table_ST_vs_t-test}%
\end{table}%

The results over the $N_{R}=500$ replications are summarized in Table \ref%
{table_ST_vs_t-test}. Overall $S_{T}$ appears to perform better than $t$%
-ordering. 
For some DGPs (such as DGP 2 and 5) both approaches perform well ($\delta =1$
indicating correct ranking for all 500 data sets). There are other DGPs
where the performance is significantly different. In particular the $t$-test
is comparatively deficient on DGPs 3 and 10, whereas $S_{T}$ performs worse
on DGP 8. This suggests that there are some DGPs in which $S_{T}$ may offer
an advantage over the $t$-test in terms of ranking regressors in order of
importance. This implies that a hybrid approach, using both measures, may
yield a more efficient method of regressor selection, which leads to the
model selection algorithm proposed in the following section.

\section{GSA algorithm}

\label{sec:algorithm} In this section we present a hybrid approach, which
combines the search paths obtained using the $t$-ratios and the $S_{T}$
measures, and then selects the best model between the two resulting
specifications. The combined procedure is expected to be able to reap the
advantages of both orderings. For simplicity, we call this algorithm the GSA
algorithm, despite the fact that it embodies some of the characteristics of
the HP algorithm. The rest of this section contains a description of the GSA
algorithm in its basic form and with two modifications.

\subsection{The basic algorithm}

The procedure involves ranking the regressors by $t$-score or $S_{T}$, then
adopting a `bottom up' approach, where candidate models are built by
successively adding regressors in order of importance. The steps are as
follows.

\begin{enumerate}
\item Order all regressors by method $m$ (i.e. either the $t$-score or $%
S_{T})$.

\item Define the initial candidate model as the empty set of regressors.

\item Add to the candidate model the highest-ranking regressor (that is not
already in the candidate model).

\item Perform an $F$ test, comparing the validity of the candidate model to
that of the GUM.

\item If the $p$-value of the $F$ test in step 4 is below a given
significance level $\alpha$, go to step 3 (continue adding regressors),
otherwise, go to step 6.

\item Since the $F$-test has not rejected the model in step 4, this is the
selected model $\vgamma^{(m)}$.
\end{enumerate}

In the following, we use the notation $\vgamma^{(t)}$ (respectively $\vgamma%
^{(S)}$) when $t$-ratios (respectively $S_{T}$) are used for the ordering.
Note that candidate variables are added starting from an empty
specification; this is hence a `bottom up' approach.

We observe that this `bottom up'\ approach is in line with the GETS
philosophy of model selection; in fact it corresponds to the nesting of
models known the `Pantula-principle' in cointegration rank determination,
see \cite{Johansen1996}. Every model in the sequence is compared with the
GUM, and hence the sequence of tests can be interpreted as an implementation
of the GETS philosophy. Moreover, it can be proved that, for large sample
sizes, the sequence selects the smallest true model in the sequence with
probability equal to $1-\alpha $, where $\alpha $ is the size of each test.
Letting $\alpha $ tend to 0 as the sample size gets large, one can prove
that this delivers a true model with probability tending to 1.\footnote{%
See for instance \cite{Paruolo2001b}. Recall that any model whose set of
regressors contains the true one is `true'.}

As a last step, the final choice of regressors $\widehat{\vgamma}$ is chosen
between $\vgamma^{(t)}$ and $\vgamma^{(S)}$ as the one with the fewest
regressors (since both models have been declared valid by the $F$-test). If
the number of regressors is the same, but the regressors are different, the
choice is made using the BIC.

The GSA algorithm depends on some key constants; the significance level of
the $F$-test, $\alpha $, is a truly `sensitive' parameter, in that varying
it strongly affects its performance. Of the remaining constants in the
algorithm, $N$, the number of points in the GSA design, can be increased to
improve accuracy; in practice it was found that $N=128$ provided good
results, and further increases made little difference.

\subsection{Adaptive-$\protect\alpha$}

Varying $\alpha $ essentially dictates how `strong' the effect of regressors
should be to be included in the final model, such that a high $\alpha $
value will tend to include more variables, whereas a low value will cut out
variables more harshly. The difficulty is that some DGPs require low $\alpha
$ for accurate identification of the true regressors, whereas others require
higher values. Hence, there could exist no single value of $\alpha $ that is
suitable for the identification of all DGPs.

A proposed modification to deal with this problem is to use an `adaptive-$%
\alpha $' , $\alpha _{\phi}$, which is allowed to vary depending on the
data. This is based on the observation that the $F$-test returns a high $p$%
-value $p_{\text{H}}$ (typically of the order 0.2-0.6) when the proposed
model is a superset of the DGP, but when one or more of the true regressors
are missing from the proposed model, the p-value will generally be low, $p_{%
\text{L}}$ (of the order $10^{-3}$ say). The values of $p_{\text{H}} $ and $%
p_{\text{L}}$ will vary depending on the DGP and data set, making it
difficult to find a single value of $\alpha $ which will yield good results
across all DGPs. For a given DGP and data set, the $p_{\text{H}}$ and $p_{%
\text{L}}$ values are easy to identify.

Therefore, it is proposed to use a value of $\alpha _{\phi }$, such that for
each data set,
\begin{equation}
\alpha _{\phi }=p_{\text{L}}+\phi (p_{\text{H}}-p_{\text{L}})
\label{eq:adapt-alpha}
\end{equation}%
where $p_{\text{H}}$ is taken as the $p$-value resulting from considering a
candidate model with all regressors that have $S_{Ti}>0.01$ against the GUM,
and $p_{\text{L}}$ is taken as the $p$-value from considering the empty set
of regressors against the GUM. The reasoning behind the definition of $p_{%
\text{H}}$ is that it represents a candidate model which will contain the
DGP regressors with a high degree of confidence. $\phi $ is a tuning
parameter that essentially determines how far between $p_{\text{L}}$ and $p_{%
\text{H}}$ the cutoff should be. Figure \ref{fig:p_values_m11(6A)}
illustrates this on a data set sampled from DGP 6B. Note that $\alpha _{\phi
}$ is used in the $F$-test for both the $t$-ranked regressors as well as
those ordered by $S_{T}$.

\begin{figure}[tbp]
\centering
\includegraphics[width=0.75\linewidth]{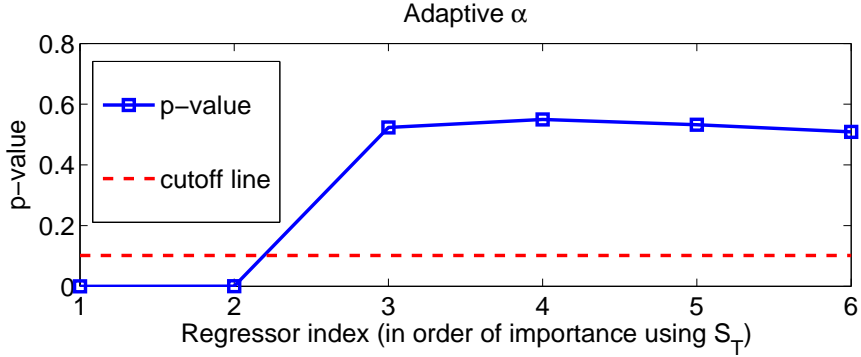}
\caption{$p$-values from $F$-test comparing candidate models to the GUM in a
sample from DGP 6B, for the 6 highest-ranked regressors. Here $\protect\phi %
=0.2$ and $\protect\alpha _{\protect\phi}$ is marked as a dotted line.}
\label{fig:p_values_m11(6A)}
\end{figure}

\subsection{Skipping regressors}

In order to correct situations where the ordering of the regressors is not
the correct one, we present here an extension of the algorithm that allows
the possibility to skip regressors in the final model. More precisely, when
step 6 is reached, it is allowed to try removing any of the other remaining
regressors, one by one, regardless of the ranking. This approach is used
instead of an exhaustive search of the combinations of remaining regressors,
because occasionally there may still be too many regressors left to make
this feasible.

In Section \ref{sec:results} the performance of the algorithm is examined,
with and without the extensions mentioned. We call $S_{T\text{full}}$ the $%
S_{T}$ procedure with adaptive-$\alpha $ and skipping regressors;\ we call $%
S_{T\text{no-skip}}$ the same procedure without skipping regressors; we call
$S_{T\text{simple}}$ the one without adaptive-$\alpha $ and without skipping
regressors.

\section{Results}

\label{sec:results}In this section we present results, using the performance
measures introduced in Section \ref{subsec_measures}. The performance is
measured with respect to the true DGP or with respect to the effective DGP
(EDGP) that one can hope to recover, given the signal to noise ratio.
Because the HP and GSA algorithms depend on tunable constants, we give
results for various values of these constants.

The procedure employed to define the EGDP is discussed in Appendix B; it
implies that the only EDGP differing from the true DGP are DGP 6 and DGP 9.
DGP 6 contains regressors $3$ and $11$, but regressor 3 is weak and EDGP 6
hence contains only regressor 11. DGP 9 contains regressors $3$, $11$, 21 29
and 37 but regressor 3 and 21 are weak and they are dropped from the
corresponding EDGP 9. More details are given in Appendix B.

The HP algorithm depends on the significance levels $\alpha $, and the GSA
algorithm on the threshold $\phi $ (which controls $\alpha _{\phi}$) for $%
S_{T\text{no-skip}} $ and $S_{T\text{full}}$ and on $\alpha $ for $S_{T\text{%
simple}}$. Because the values of $\alpha$ and $\phi$ can seriously affect
the performance of the algorithms, a fair comparison of the performance of
the algorithms may be difficult, especially since the true parameter values
will not be known in practice. To deal with this problem, the performance of
the algorithms was measured at a number of parameter values within a
plausible range.

This allowed two ways of comparing the algorithms: first, the `optimised'
performance, corresponding to the value of $\alpha $ or $\phi $ that
produced the highest $C_{1}$ score, averaged over the 11 DGPs. This can be
viewed as the `potential performance'. In practice, the optimization was
performed with a grid search on $\alpha $ and $\phi $ with
$N_{R}=10^{3}$ replications, averaging across DGPs.

Secondly, a qualitative comparison was drawn between the algorithms of the
average performance over the range of parameter values. This latter
comparison gives some insight into the more realistic situation where the
optimum parameter values are not known.

\begin{table}[tbp] \centering%
\begin{tabular}{l|rrr|rrr|rrr|rrr}
\hline
EDGP & \multicolumn{3}{|c}{$S_{T\text{simple}}$} & \multicolumn{3}{|c}{$S_{T%
\text{no-skip}}$} & \multicolumn{3}{|c}{$S_{T\text{full}}$} &
\multicolumn{3}{|c}{HP$_{\text{optimized}}$} \\
& \multicolumn{3}{|c}{$\alpha =0.0371$} & \multicolumn{3}{|c}{$\phi =0.3$} &
\multicolumn{3}{|c}{$\phi =0.3$} & \multicolumn{3}{|c}{$\alpha =4\cdot
10^{-4}$} \\ \hline
& $C_1$ & Gauge\  & Pot.\  & $C_1$ & Gauge\  & Pot.\  & $C_1$ & Gauge\  &
Pot.\  & $C_1$ & Gauge\  & Pot.\  \\
1 & 98.70 & 0.11 & 100 & 99.83 & 0.01 & 100 & 99.83 & 0.00 & 100 & 99.22 &
0.02 & 100 \\
2 & 98.52 & 0.09 & 99.98 & 99.37 & 0.02 & 100 & 99.37 & 0.02 & 100 & 98.94 &
0.03 & 100 \\
3 & 79.36 & 0.80 & 94.73 & 95.23 & 0.10 & 98.53 & 95.97 & 0.06 & 98.48 &
62.01 & 0.05 & 81.17 \\
4 & 98.59 & 0.09 & 100 & 99.16 & 0.03 & 100 & 99.16 & 0.02 & 100 & 99.29 &
0.02 & 99.92 \\
5 & 98.79 & 0.08 & 100 & 99.86 & 0.00 & 100 & 99.86 & 0.00 & 100 & 99.26 &
0.02 & 100 \\
6 & 98.70 & 0.09 & 99.99 & 99.22 & 0.03 & 99.99 & 99.22 & 0.02 & 99.99 &
99.19 & 0.03 & 99.84 \\
6A & 65.31 & 0.46 & 87.86 & 78.37 & 0.66 & 97.92 & 96.24 & 0.05 & 98.52 &
85.30 & 0.55 & 92.91 \\
6B & 97.61 & 0.10 & 99.99 & 98.57 & 0.04 & 99.99 & 99.37 & 0.02 & 100 & 98.38
& 0.07 & 99.52 \\
7 & 92.66 & 0.13 & 98.58 & 97.09 & 0.09 & 99.87 & 99.50 & 0.01 & 99.90 &
98.76 & 0.03 & 99.82 \\
8 & 98.44 & 0.07 & 99.97 & 99.91 & 0.00 & 100 & 99.92 & 0.00 & 100 & 99.05 &
0.03 & 100 \\
9 & 91.38 & 0.18 & 98.62 & 96.53 & 0.11 & 99.93 & 99.61 & 0.01 & 99.94 &
98.18 & 0.04 & 99.78 \\ \hline
Mean & 92.55 & 0.20 & 98.16 & 96.65 & 0.10 & 99.66 & 98.91 & 0.02 & 99.71 &
94.33 & 0.08 & 97.54 \\ \hline
\end{tabular}%
\vspace{0.3cm}
\caption{Percentage $C_1$, gauge and potency by EDGP. Optimised parameter
values used.}\label{tab:results_EDGP}%
\end{table}%

\subsection{Search for the EDGP}

Table \ref{tab:results_EDGP} shows the classification results in terms of $%
C_{1}$ matches, as well as the potency and gauge measures, for both
algorithms at their optimal parameter values. Results are shown with and
without the extensions discussed in Section \ref{sec:algorithm}, using $%
N_{R}=10^{4}$. Recovery of the true specification is here understood in the
EDGP sense.

The $C_{1}$ column measures the percentage frequency with which the
algorithms identified the EDGP. One notable fact is that the performance of
the HP algorithm can be vastly improved (compared to the results in HP)
simply by setting $\alpha $ to a better value, in this case $\alpha =4\times
10^{-4}$, compare with Table~\ref{tab:results_HP}.


The comparison shows that with the full $S_{T}$ algorithm, the correct
classification rate, averaged over all DGPs, is improved from 94.6\% to
98.9\%, {which corresponds to five-fold drop in the failure rate from 5.4\%
to 1.1\%.}


Removing the `skipping' extension, the performance falls to 97.1\%, and
further to 92.7\% without the adaptive-$\alpha $ feature. Examining the DGPs
individually, one can see that the HP algorithm performs well on all DGPs
except 3 and 6A, where $C_{1}$ is significantly lower than for the GSA
algorithm. The $S_{T}$ method, however, performs well on all the DGPs
investigated here, giving improvements over the HP algorithm of around 30
percentage points in DGP3, and about 10 percentage points in DGP 6A. It is
evident though that the adaptive-$\alpha $ and the skipping extensions
contribute significantly to the performance in those cases.

The potency and gauge measures (also in Table \ref{tab:results_EDGP}) reveal
a little more about the nature of the errors made by the algorithms. Gauge
is very low for all algorithms, due to the fact that the DGPs consist of
only a small fraction of the number of candidate regressors, as well as the
good performance of all algorithms. One can see though, that higher gauge
measures are found in DGP 6A, indicating the inclusion of irrelevant
regressors, except for the full $S_{T}$ algorithm, which has gauges of
practically zero. The potency measures show that the true regressors are
being identified nearly all the time, except for the HP algorithm on DGP3,
which removes true regressors with some positive frequency.

\begin{table}[tbp] \centering%
\begin{tabular}{l|rrr|rrr|rrr|rrr}
\hline
DGP & \multicolumn{3}{|c}{$S_{T\text{simple}}$} & \multicolumn{3}{|c}{$S_{T%
\text{no-skip}}$} & \multicolumn{3}{|c}{$S_{T\text{full}}$} &
\multicolumn{3}{|c}{HP} \\
& \multicolumn{3}{|c}{$\alpha =0.0371$} & \multicolumn{3}{|c}{$\phi =0.3$} &
\multicolumn{3}{|c}{$\phi =0.3$} & \multicolumn{3}{|c}{$\alpha =4\cdot
10^{-4}$} \\ \hline
& $C_1$ & Gauge & Pot. & $C_1$ & Gauge & Pot. & $C_1$ & Gauge & Pot. & $C_1$
& Gauge & Pot. \\ \hline
6 & 1.00 & 0.09 & 50.01 & 1.00 & 0.03 & 50.01 & 2.00 & 0.02 & 50.01 & 3.00 &
0.03 & 49.94 \\
9 & 0.00 & 0.19 & 59.20 & 0.00 & 0.11 & 59.97 & 0.00 & 0.01 & 59.97 & 0.00 &
0.04 & 59.92 \\ \hline
Mean & 0.50 & 0.14 & 54.60 & 0.50 & 0.07 & 54.99 & 1.00 & 0.02 & 54.99 & 1.50
& 0.03 & 54.93 \\ \hline
\end{tabular}%
\vspace{0.3cm}
\caption{Percentage $C_1$, gauge and potency by DGP. Optimised parameter
values used. The mean frequency is taken over all DGPs, but only the results for
DGPs 6 and 9 are shown since the remaining results are identical to Table
\protect\ref{tab:results_EDGP}.} \label{tab:results_DGP}%
\end{table}%

\subsection{Recovering the DGP}

Although it is argued here that the signal-to-noise ratio in DGPs 6 and 9 is
too low for certain regressors to be identified, it is still worth looking
at the results with respect to the true DGP, shown in Table \ref%
{tab:results_DGP}.
All algorithms failed to identify the true DGP even once out of the $10^4$
runs. This fact is reflected in the potency, which drops from $100\%$ to $%
50\%$ (DGP 6), and about $60\%$ (DGP 9). These results are mirrored in the
original results of HP. This suggest that GSA may not help when regressors
are `weak'.

\subsection{Robustness of algorithms}


\begin{figure}[tbp]
\centering
\includegraphics[width=0.75\linewidth]{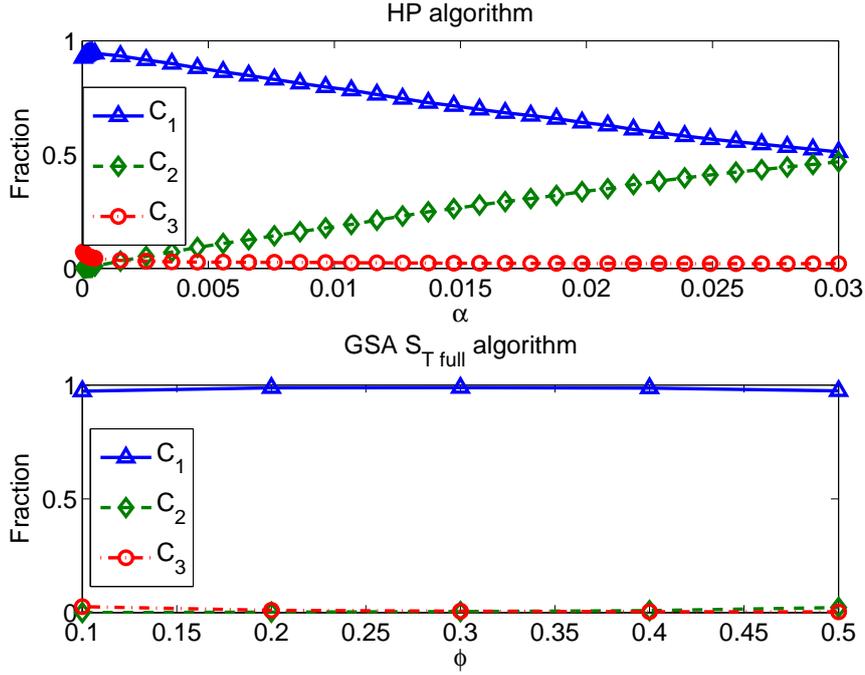}
\caption{Optimisation of algorithms with respect to tuning parameters; upper
panel: HP algorithm; lower panel: $S_{T \text{full}}$ algorithm.}
\label{fig:alg_opt}
\end{figure}


As discussed earlier, the results in Table \ref{tab:results_EDGP} are
obtained after optimisation of the tuning parameters $\alpha $ and $\phi $.
This provides a measure of potential performance, but in reality the best $%
\alpha $ and $\phi $ will not be known. For this reason it is indicative to
show the results when varying the tuning parameter. The upper panel in
Figure \ref{fig:alg_opt} shows how the categorisation of the final model
varies with $\alpha $ in the HP algorithm. It is clear that the peak
performance of the algorithm is obtained in a small neighborhood around a
rather sharp maximum at a low $\alpha $ value -- increasing $\alpha $ from
this value results in a rapid increase in $C_{2}$, whereas decreasing it
sharply increases $C_{3}$.

In contrast, the lower panel in Figure \ref{fig:alg_opt} shows the same plot
for the full $S_{T}$ algorithm. While the value of $\phi$ varies between 0.1
and 0.5, the value of $C_1$ is generally above 95\%, and $C_2$ and $C_3$ are
consistently very low. While it is difficult to accurately compare this with
the HP algorithm, due to the incomparable scales of the two optimising
parameters, the $S_{T}$ algorithm seems to be considerably more robust, and
has the advantage that the tuning parameter, $\phi$, is relatively
problem-independent.

\section{Conclusions}

\label{sec:conclusions} %

In the model selection problem, one has to choose whether or not to include
candidate variables. The approach in this paper is to view the problem as a
sensitivity analysis of a measure of fit on the space of candidate
variables.
One therefore calculates the sensitivity 
e.g. of the BIC with respect to the presence (or absence) of each candidate
variable. Since global methods are used, the global `importance' is
estimated, including all interaction effects among indicators for regressor
presence.

These interactions are in principle relevant, as the importance of a given
regressor being or not being in the DGP is conditioned by inclusion or
exclusion of the other regressors.
For this reason we used $S_{T}$, a sensitivity measure capable of
appreciating the sensitivity of a trigger for the presence of one regressor,
inclusive of its interaction effects with triggers for all other regressors.

The GSA algorithm outperforms the HP algorithm both in terms of its
potential if tuning parameters were known, and in average performance in the
practical situation when tuning parameters are unknown. The improvement is
substantial and amounts to a five-fold drop in the failure rate over the
ensemble of HP's designs. %
Arguably, the robustness of the algorithm is an {even} more distinguishing
feature, since the optimal parameter values would not be known in a
practical case.


This study has been a first exploration of new uses of GSA in the world of
model selection; it has shown to what extent measures from GSA can
contribute, albeit on a small (but representative) set of test problems.
It appears then that $S_{T}$ can be used in selecting important sources of
variation in regression. These results call for more research on the use of
GSA methods in model selection. 






\bibliographystyle{Chicago_pp}
\bibliography{EMSA_refs}

\section*{Appendix A: Properties of orderings based on $S_{Ti}$}

Recall that $S_{Ti}=\mathbb{\sigma }_{Ti}^{2}/V=\mathbb{E}_{\vgamma%
_{-i}}\left( \mathbb{V}_{\gamma _{i}}\left( q\mid \vgamma_{-i}\right)
\right) /V$. In this Appendix, we first express $\mathbb{\sigma }_{Ti}^{2}$
as a sum of terms involving $\widehat{\mathbb{\sigma }}_{\vgamma}^{2}$ for $%
\vgamma\in \Gamma$ in Lemma \ref{Lemma1}; next we show the large $n$
behavior of $\widehat{\mathbb{\sigma }}_{\vgamma}^{2}$ in Lemma \ref{Lemma2}.
Lemma \ref{Lemma2bis} states the probability limit of $\mathbb{\sigma }_{Ti}^{2}$.
Lemma \ref{Lemma3} shows that, in case the true regressors in the DGP and
the irrelevant ones are uncorrelated, $\mathbb{\sigma }_{Ti}^{2}\overset{p}{%
\rightarrow }0$ for an irrelevant regressor $i$, while $\mathbb{\sigma }%
_{Ti}^{2}\overset{p}{\rightarrow }c_{i}>0$ for a relevant one. Under the
same conditions of Lemma \ref{Lemma3}, Theorem \ref{Th_4} shows that for
large samples, a scree plot on the ordered $S_{Ti}$ allows to separate the
relevant regressors from the irrelevant ones.

Let $\gamma _{i}=\ve_{i}^{\prime }\vgamma$ and $\vgamma_{-i}=\mA_{i}^{\prime
}\vgamma$, where $\ve_{i}$ is the $i$-th column of the identity matrix of
order $p$, $\mI_{p}$ and $\mA_{i}$ is a $p\times p-1$ matrix containing all
the columns of $\mI_{p}$ except the $i$-th one. We write $q\left( \vgamma%
\right) $ as $q\left( \gamma _{i},\vgamma_{-i}\right) $ or, more simply as $%
q_{-i}\left( \gamma _{i}\right) $. Denote by $\vgamma^{(i,0)}$ the vector
corresponding to $\gamma _{i}=0$, with the remaining coordinates equal to $%
\vgamma_{-i}$, and let $\vgamma^{(i,1)}$ the vector corresponding to $\gamma
_{i}=1$ with the remaining coordinates equal to $\vgamma_{-i}$. Finally let $%
\Gamma _{-i}:=\{\vgamma_{-i}=\mA_{i}^{\prime }\vgamma,\vgamma\in \Gamma \}$.

\begin{lemma}[$\protect\sigma _{Ti}^{2}$ as an average over $\vgamma_{i}$]
\label{Lemma1}One has
\begin{equation}
\mathbb{E}_{\vgamma_{-i}}\left( \mathbb{V}_{\gamma _{i}}\left( q\mid \vgamma%
_{-i}\right) \right) =\frac{1}{4\cdot 2^{p-1}}\sum_{\vgamma_{-i}\in \Gamma
_{-i}}\left( q_{-i}\left( 1\right) -q_{-i}\left( 0\right) \right) ^{2}
\label{eq_STi_BIC}
\end{equation}%
and for $q$ equal to BIC (or any other consistent information criterion)
\begin{equation}
q_{-i}\left( 1\right) -q_{-i}\left( 0\right) =\log \left( \frac{\widehat{%
\sigma }_{\vgamma^{(i,1)}}^{2}}{\widehat{\sigma }_{\vgamma^{(i,0)}}^{2}}%
\right) +o\left( 1\right) ,  \label{eq_q1minusq0}
\end{equation}%
where $o\left( 1\right) $ is a term tending to $0$ for large $n$ and $%
\widehat{\sigma }_{\vgamma}^{2}:=n^{-1}\widehat{\vvarepsilon}_{\vgamma%
}^{\prime }\widehat{\vvarepsilon}_{\vgamma}$ where $\widehat{\vvarepsilon}_{%
\vgamma}$ are the residuals of model $\vgamma$.
\end{lemma}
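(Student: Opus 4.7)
The plan is to prove the two equalities in turn, both of which reduce to direct computations once the right structural observations are made.

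For equation \eqref{eq_STi_BIC}, I would start from the fact that under $P$ the coordinate $\gamma_i$ is Bernoulli($1/2$) independent of $\vgamma_{-i}$, and that conditional on $\vgamma_{-i}$ the random variable $q$ takes only the two values $q_{-i}(0)$ and $q_{-i}(1)$, each with probability $1/2$. For any binary random variable taking values $a,b$ with probability $1/2$ each, the variance is $(a-b)^{2}/4$. Applied here this yields
\begin{equation*}
\mathbb{V}_{\gamma_{i}}\!\left( q\mid \vgamma_{-i}\right) =\tfrac{1}{4}\bigl(q_{-i}(1)-q_{-i}(0)\bigr)^{2}.
\end{equation*}
Since $\vgamma_{-i}$ is uniform on $\Gamma_{-i}$ and $|\Gamma_{-i}|=2^{p-1}$, taking expectations over $\vgamma_{-i}$ gives the stated average, with the normalising factor $1/(4\cdot 2^{p-1})$ coming from the product of the $1/4$ above and $1/2^{p-1}$.

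For equation \eqref{eq_q1minusq0}, I would simply substitute the BIC expression $\mathrm{BIC}(\vgamma)=\log\widehat{\sigma}_{\vgamma}^{2}+k_{\vgamma}\log(n)/n$ already recalled in the paper, applied at the two configurations $\vgamma^{(i,1)}$ and $\vgamma^{(i,0)}$. The log terms combine into $\log(\widehat{\sigma}_{\vgamma^{(i,1)}}^{2}/\widehat{\sigma}_{\vgamma^{(i,0)}}^{2})$, and the penalty terms contribute $(k_{\vgamma^{(i,1)}}-k_{\vgamma^{(i,0)}})\log(n)/n$. The key observation is that $\vgamma^{(i,1)}$ and $\vgamma^{(i,0)}$ differ only in coordinate $i$, so the dimension difference is exactly $1$ and the penalty contribution equals $\log(n)/n$, which is $o(1)$ as $n\to\infty$. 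For any other consistent information criterion of the form $\log\widehat{\sigma}_{\vgamma}^{2}+k_{\vgamma}c_{n}/n$ with $c_{n}/n\to 0$, the same conclusion follows.

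Neither step presents a real obstacle: the first is a direct binary-variance computation, and the second is algebraic substitution plus a limit argument. The only points to be careful about are (i) keeping the $2^{p-1}$ cardinality of $\Gamma_{-i}$ and the Bernoulli$(1/2)$ assumption on $\gamma_{i}$ consistent with the factor $4\cdot 2^{p-1}$ in the denominator, and (ii) making clear that the $o(1)$ statement is non-stochastic and uniform in $\vgamma_{-i}$, so that it can later be carried through into the plug-in estimator of $\sigma_{Ti}^{2}$ without further argument.
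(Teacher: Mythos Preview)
Your proposal is correct and follows essentially the same route as the paper: the paper also reduces $\mathbb{V}_{\gamma_i}(q\mid\vgamma_{-i})$ to $\tfrac{1}{4}(q_{-i}(1)-q_{-i}(0))^2$ (via the explicit $\mathbb{E}(q^2)-(\mathbb{E} q)^2$ computation rather than quoting the two-point variance formula) and then averages over $\Gamma_{-i}$, and for \eqref{eq_q1minusq0} it likewise substitutes the BIC form, uses $k_{\vgamma^{(i,1)}}-k_{\vgamma^{(i,0)}}=1$, and notes $c_n\to 0$. Your remark that the $o(1)$ term equals the deterministic penalty increment and is therefore uniform in $\vgamma_{-i}$ is a useful addition not made explicit in the paper.
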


\begin{proof}
Note that for $h=1,2$ one has $\mathbb{E}_{\gamma _{i}}\left( q^{h}\mid %
\vgamma_{-i}\right) =\frac{1}{2}\left( q_{-i}^{h}\left( 1\right)
+q_{-i}^{h}\left( 0\right) \right) $ so that
\begin{eqnarray*}
\mathbb{V}_{\gamma _{i}}\left( q\mid \vgamma_{-i}\right) &=&\mathbb{E}%
_{\gamma _{i}}\left( q^{2}\mid \vgamma_{-i}\right) -\left( \mathbb{E}%
_{\gamma _{i}}\left( q\mid \vgamma_{-i}\right) \right) ^{2} \\
\mathbb{\ } &=&\frac{1}{2}\left( q_{-i}^{2}\left( 1\right) +q_{-i}^{2}\left(
0\right) \right) -\frac{1}{4}\left( q_{-i}^{2}\left( 1\right)
+q_{-i}^{2}\left( 0\right) +2q_{-i}\left( 1\right) q_{-i}\left( 0\right)
\right) \\
&=&\frac{1}{4}\left( q_{-i}\left( 1\right) -q_{-i}\left( 0\right) \right)
^{2}.
\end{eqnarray*}%
Hence one finds (\ref{eq_STi_BIC}). When $q$ is BIC, $q\left( \vgamma\right)
=\log \widehat{\sigma }_{\vgamma}^{2}+k_{\vgamma}c_{n}$ with $c_{n}:=\log
(n)/n$. Other consistent information criteria replace $\log n$ with some
other increasing function $f(n)$ of $n$ with the property $%
c_{n}=f(n)/n\rightarrow 0$, see \citet{Paulsen1984} Theorem 1. Note also
that $k_{\vgamma^{(i,1)}}-k_{\vgamma^{(i,0)}}=1$, and that one has
\begin{equation}
q_{-i}\left( 1\right) -q_{-i}\left( 0\right) =\log \left( \frac{\widehat{%
\sigma }_{\vgamma^{(i,1)}}^{2}}{\widehat{\sigma }_{\vgamma^{(i,0)}}^{2}}%
\right) +\left( k_{\vgamma^{(i,1)}}-k_{\vgamma^{(i,0)}}\right) c_{n}=\log
\left( \frac{\widehat{\sigma }_{\vgamma^{(i,1)}}^{2}}{\widehat{\sigma }_{%
\vgamma^{(i,0)}}^{2}}\right) +c_{n}.  \notag
\end{equation}%
Because $c_{n}\rightarrow 0$, one finds (\ref{eq_q1minusq0}).
\end{proof}

We next wish to discuss the asymptotic behaviour of $\widehat{\sigma }_{%
\vgamma}^{2}$. Let $w_{t}:=(y_{t},x_{1,t},\dots ,x_{p,t},\varepsilon
_{t})^{\prime }$, where, without loss of generality, we assume that all
variables have mean zero. Denote $\mSigma:=E(w_{t}w_{t}^{\prime })$, where%
\begin{equation*}
\mSigma=\left(
\begin{array}{ccc}
\Sigma _{yy} & \mSigma_{y\vx} & \sigma ^{2} \\
& \mSigma_{\vx\vx} & \vzeros \\
&  & \sigma ^{2}%
\end{array}%
\right) =\left(
\begin{array}{ccccc}
\Sigma _{yy} & \Sigma _{y1} & \dots & \Sigma _{yp} & \sigma ^{2} \\
& \Sigma _{11} &  & \Sigma _{1p} & 0 \\
&  & \ddots &  &  \\
&  &  & \Sigma _{pp} & 0 \\
&  &  &  & \sigma ^{2}%
\end{array}%
\right) .
\end{equation*}
Let $\Sigma _{ij.v}:=\Sigma _{ij}-\Sigma _{iv}\Sigma _{vv}^{-1}\Sigma _{vj}$
indicate partial covariances, where $v:=\{i_{1},\dots ,i_{s}\}$ indicates a
set of indices. Note that $\mSigma_{\vx\varepsilon }=\vzeros$.

Let $\mathbb{J}:=\{1,\dots ,p\}$ be the set of the first $p$ integers, $%
\mathbb{T}:=\{i\in \mathbb{J}:\beta _{0,i}\neq 0\}$, the set of all
regressor indices in the DGP, with $r_{0}$ elements, and $\mathbb{M}:=%
\mathbb{J}\backslash \mathbb{T}$ the set of all regressor indices for
irrelevant regressors.\footnote{%
Here $\mathbb{J}\backslash \mathbb{T}$ denotes the set difference $\mathbb{J}%
\backslash \mathbb{T}:=\{i:i\in \mathbb{J},i\notin \mathbb{T}\}$;
sums over empty sets are understood to be equal to 0.} For each $%
\vgamma$, let $a_{\vgamma}$ $:=\{i_{1},\dots ,i_{k_{\vgamma}}\}^{\prime }$
indicate the set of indices $i_{j}\ $such that $\gamma _{i_{j}}=1$ in $%
\vgamma$.\ Similarly let $b_{\vgamma}:=\{i_{1},\dots ,i_{s}\}^{\prime }$
indicate the set of indices $i_{j}\ $that belong to $a_{\vgamma}\backslash
\mathbb{T}$.

We represent $\vbeta_{0}$ as $\vbeta_{0}=\mH\vphi$, where $\mH$ contains the
$r_{0}$ columns of $\mI_{p}$ corresponding to $\beta _{0,i}\neq 0$, and $%
\vphi$ contains the corresponding $\beta _{0,i}$ coefficients. Moreover we
write the matrix of regressors in the $\vgamma$ specification as $\mX\mU_{%
\vgamma}$, where $\mU_{\vgamma}$ contains the columns of $\mI_{p}$ with
column indices $a_{\vgamma}$. Define also $\mM_{\vgamma}:=\mI_{n}-\mX\mU_{%
\vgamma}\left( \mU_{\vgamma}^{\prime }\mX^{\prime }\mX\mU_{\vgamma}\right)
^{-1}\mU_{\vgamma}^{\prime }\mX^{\prime }$.

\begin{lemma}[Large sample behavior of $\protect\widehat{\protect\sigma }_{
\vgamma}^{2}$]
\label{Lemma2}As $n\rightarrow \infty $, one has
\begin{equation}
\widehat{\sigma }_{\vgamma}^{2}\overset{p}{\rightarrow }\sigma
^{2}+\sum_{h,j\in \mathbb{T}\backslash a_{\vgamma}\ }\beta _{0,h}\Sigma
_{hj.b_{\vgamma}}\beta _{0,j},  \label{eq_sigma2_plim}
\end{equation}%
where $\mathbb{T}\backslash a_{\vgamma}$ is the set of indices of the true
regressors omitted from the $\vgamma$ specification, and $b_{\vgamma}$ is
the set of indices $a_{\vgamma}\backslash \mathbb{T}$ of the regressors
included in the $\vgamma$ specification except the ones that belong to the
DGP. Remark that the sum in $(\ref{eq_sigma2_plim})$ is equal to $0$ when $%
\vgamma$ is correctly specified\ (i.e. it contains all regressors in the
DGP)\thinspace\ i.e. $\mathbb{T}\backslash a_{\vgamma}=\varnothing $.
\end{lemma}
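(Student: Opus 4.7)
The plan is to rewrite the residual sum of squares via the annihilator $\mM_{\vgamma}$, substitute the DGP relation $\vy = \mX\mH\vphi + \vvarepsilon$, and take limits of the three resulting quadratic forms:
\[
\widehat{\sigma}_{\vgamma}^{2} = \tfrac{1}{n}\vy'\mM_{\vgamma}\vy = \tfrac{1}{n}\vphi'\mH'\mX'\mM_{\vgamma}\mX\mH\vphi + \tfrac{2}{n}\vvarepsilon'\mM_{\vgamma}\mX\mH\vphi + \tfrac{1}{n}\vvarepsilon'\mM_{\vgamma}\vvarepsilon.
\]
The noise term has expectation $\sigma^{2}(n-k_{\vgamma})/n \to \sigma^{2}$ and variance of order $n^{-1}$ under the Gaussian assumption, so a second-moment argument (or the quadratic-form WLLN) gives $\tfrac{1}{n}\vvarepsilon'\mM_{\vgamma}\vvarepsilon \pto \sigma^{2}$. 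The cross term vanishes: expanding $\mM_{\vgamma}$ and using $n^{-1}\mX'\vvarepsilon \pto \vzeros$ (which follows from $\mSigma_{\vx\varepsilon}=\vzeros$ and an LLN on the martingale-difference-type sums) together with $n^{-1}\mX'\mX \pto \mSigma_{\vx\vx}$ finite and positive definite on the relevant columns reduces both pieces to zero.

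The heart of the argument is the first quadratic form. The crucial observation is that $\mM_{\vgamma}\, x_{i}=\vzeros$ for every $i\in a_{\vgamma}$, since $x_{i}$ is a column of $\mX\mU_{\vgamma}$. Expanding $\mX\mH\vphi=\sum_{i\in\mathbb{T}}\beta_{0,i}\,x_{i}$ and splitting the sum into $i\in \mathbb{T}\cap a_{\vgamma}$ (killed by $\mM_{\vgamma}$) and $i\in \mathbb{T}\setminus a_{\vgamma}$, only the omitted true regressors contribute, yielding
\[
\tfrac{1}{n}\vphi'\mH'\mX'\mM_{\vgamma}\mX\mH\vphi = \sum_{h,j\in\mathbb{T}\setminus a_{\vgamma}} \beta_{0,h}\beta_{0,j}\,\tfrac{1}{n}x_{h}'\mM_{\vgamma}x_{j}.
\]
A direct expansion of $\mM_{\vgamma}$ combined with the WLLN and the continuous-mapping theorem shows that $n^{-1}x_{h}'\mM_{\vgamma}x_{j}$ converges in probability to a partial covariance between $x_{h}$ and $x_{j}$ after projecting out the regressors removed by $\mM_{\vgamma}$. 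When $\mathbb{T}\subseteq a_{\vgamma}$ (correct specification) the summation set is empty and the limit is $\sigma^{2}$, matching the closing remark of the statement.

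The main obstacle is identifying precisely which conditioning set appears in the partial covariance. The direct manipulation produces $\Sigma_{hj.a_{\vgamma}}$, i.e.\ partialling out \emph{all} regressors indexed by $a_{\vgamma}$; the statement, however, asserts $\Sigma_{hj.b_{\vgamma}}$, partialling out only the \emph{irrelevant} included regressors $b_{\vgamma}=a_{\vgamma}\setminus\mathbb{T}$. Reconciling these two forms is the non-routine step: one route is to invoke the chain rule for partial covariances and exploit the fact that, for $h\in\mathbb{T}\setminus a_{\vgamma}$, the contribution of the already-included true regressors $\mathbb{T}\cap a_{\vgamma}$ factors out of the $\vphi$-weighted bilinear form; an alternative is a careful check of the conditioning convention adopted in the paper's definition of $\Sigma_{ij.v}$. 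Once this partial-covariance algebra is settled, summing the three limits delivers $(\ref{eq_sigma2_plim})$; everything else is standard WLLN bookkeeping.
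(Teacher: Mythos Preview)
Your decomposition into three quadratic forms and the handling of the noise and cross terms match the paper's proof exactly (the paper invokes the LLN for stationary linear processes together with $\mSigma_{\vx\varepsilon}=\vzeros$, which is the same mechanism you describe). For the main term the paper takes a slightly different but equivalent route: rather than using $\mM_{\vgamma}x_{i}=0$ for $i\in a_{\vgamma}$ and then taking entrywise limits, it first passes to the population limit $\mH'\bigl(\mSigma_{\vx\vx}-\mSigma_{\vx\vx}\mU_{\vgamma}(\mU_{\vgamma}'\mSigma_{\vx\vx}\mU_{\vgamma})^{-1}\mU_{\vgamma}'\mSigma_{\vx\vx}\bigr)\mH$ and rewrites it via a non-orthogonal projection identity as $\mH'\mV_{\vgamma}(\mV_{\vgamma}'\mSigma_{\vx\vx}^{-1}\mV_{\vgamma})^{-1}\mV_{\vgamma}'\mH$ with $\mV_{\vgamma}=\mU_{\vgamma,\perp}$, then applies a block-inverse formula. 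Your more elementary route and theirs land on the same object.

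On the conditioning-set issue you flag as the ``main obstacle'': your instinct is correct, and the paper does \emph{not} resolve it by any additional argument. The block-inverse identity applied to $(\mV_{\vgamma}'\mSigma_{\vx\vx}^{-1}\mV_{\vgamma})^{-1}$ yields the partial covariance of the excluded block given the included block, i.e.\ conditioning on $a_{\vgamma}$, exactly as your direct computation produces. The paper simply asserts, by citation, that the result is $\Sigma_{hj.b_{\vgamma}}$ with $b_{\vgamma}=a_{\vgamma}\setminus\mathbb{T}$; but $\Sigma_{hj.a_{\vgamma}}$ and $\Sigma_{hj.b_{\vgamma}}$ do not coincide in general (they differ whenever the included true regressors $a_{\vgamma}\cap\mathbb{T}$ are correlated with the omitted true regressors $\mathbb{T}\setminus a_{\vgamma}$ after adjusting for $b_{\vgamma}$). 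So there is no hidden chain-rule identity for you to find: the discrepancy is a slip in the paper's stated conditioning set, and your $\Sigma_{hj.a_{\vgamma}}$ is the quantity the computation actually delivers. This does not damage the downstream results, because under the orthogonality hypothesis $\Sigma_{\ell j}=0$ for $\ell\in\mathbb{M}$, $j\in\mathbb{T}$ used in the next lemma, both conditionings lead to the same cancellation in $q_{-i}(1)-q_{-i}(0)$.
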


\begin{proof}
Because $\vy=\mX\mH\vphi+\vvarepsilon$ one has
\begin{equation*}
\widehat{\sigma }_{\vgamma}^{2}=n^{-1}\vy^{\prime }\mM_{\vgamma}\vy=n^{-1}%
\vvarepsilon^{\prime }\mM_{\vgamma}\vvarepsilon+2n^{-1}\vvarepsilon^{\prime }%
\mM_{\vgamma}\mX\mH\vphi+n^{-1}\vphi^{\prime }\mH^{\prime }\mX^{\prime }\mM_{%
\vgamma}\mX\mH\vphi
\end{equation*}%
Because $\mSigma_{\vx\varepsilon }=\vzeros$, by the law or large numbers for
stationary linear processes, see \citet{Anderson1971}, one finds%
\begin{align*}
n^{-1}\vvarepsilon^{\prime }\mM_{\vgamma}\vvarepsilon& \overset{p}{%
\rightarrow }\sigma ^{2}-\mSigma_{\varepsilon \vx}\mU_{\vgamma}(\mU_{\vgamma%
}^{\prime }\mSigma_{\vx\vx}\mU_{\vgamma})^{-1}\mU_{\vgamma}^{\prime }\mSigma%
_{\vx\varepsilon }=\sigma ^{2}, \\
n^{-1}\vvarepsilon^{\prime }\mM_{\vgamma}\mX& \overset{p}{\rightarrow }%
\mSigma_{\varepsilon \vx}\left( \mI_{p}-\mU_{\vgamma}(\mU_{\vgamma}^{\prime }%
\mSigma_{\vx\vx}\mU_{\vgamma})^{-1}\mU_{\vgamma}^{\prime }\mSigma_{\vx\vx%
}\right) =\vzeros.
\end{align*}

Similarly
\begin{align*}
n^{-1}\mH^{\prime }\mX^{\prime }\mM_{\vgamma}\mX\mH& \overset{p}{\rightarrow
}\mH^{\prime }\left( \mSigma_{\vx\vx}-\mSigma_{\vx\vx}\mU_{\vgamma}(\mU_{%
\vgamma}^{\prime }\mSigma_{\vx\vx}\mU_{\vgamma})^{-1}\mU_{\vgamma}^{\prime }%
\mSigma_{\vx\vx}\right) \mH \\
& =\mH^{\prime }\mV_{\vgamma}\left( \mV_{\vgamma}^{\prime }\mSigma_{\vx\vx%
}^{-1}\mV_{\vgamma}\right) ^{-1}\mV_{\vgamma}^{\prime }\mH
\end{align*}%
where $\mV_{\vgamma}=\mU_{\vgamma,\perp }$ contains the columns in $\mI_{p}$
not contained in $\mU_{\vgamma}$, and the last equality is a special case of
a non-orthogonal projection identity, see e.g. eq. (2.13) in %
\citet{Paruolo1999} and references therein. Here $\mU_{\perp }$
indicates a basis of the orthogonal complement of the space spanned by the
columns in $\mU$. Observe that the $(p-k_{\vgamma})\times r_{0}$ matrix $\mC%
_{\vgamma}:=\mV_{\vgamma}^{\prime }\mH$ contains the columns of $\mI_{p-r_{%
\vgamma}}$ corresponding to the index set of regressors in $v_{\vgamma}:=%
\mathbb{T}\backslash a_{\vgamma}$. Hence, using e.g. eq. (A.4) in %
\citet{Paruolo1999}, one finds $\left( \mV_{\vgamma}^{\prime }\mSigma_{\vx\vx%
}^{-1}\mV_{\vgamma}\right) ^{-1}=\mSigma_{v_{\vgamma}v_{\vgamma}.b_{\vgamma%
}} $. Substituting one finds
\begin{equation*}
n^{-1}\vphi^{\prime }\mH^{\prime }\mX^{\prime }\mM_{\vgamma}\mX\mH\vphi%
\overset{p}{\rightarrow }\vphi^{\prime }\mC_{\vgamma}^{\prime }\mSigma_{v_{%
\vgamma}v_{\vgamma}.b_{\vgamma}}\mC_{\vgamma}\vphi.
\end{equation*}

Simplifying one obtains (\ref{eq_sigma2_plim}).
\end{proof}

The above results lead to the following general formulation of the probability limit
of $\sigma _{Ti}^{2}$.

\begin{lemma}[Large sample behaviour of $\protect\sigma _{Ti}^{2}$]
\label{Lemma2bis}As $n\rightarrow \infty $ one has
\begin{equation*}
\mathbb{\sigma }_{Ti}^{2}\overset{p}{\rightarrow }\frac{1}{4\cdot 2^{p-1}}%
\sum_{\vgamma_{-i}\in \Gamma _{-i}}\log \left( \frac{\sigma
^{2}+\sum_{h,j\in \mathbb{T}\backslash a_{\vgamma^{(i,1)}}}\beta
_{0,h}\Sigma _{hj.b_{\vgamma^{(i,1)}}}\beta _{0,j}}{\sigma ^{2}+\sum_{h,j\in
\mathbb{T}\backslash a_{\vgamma^{(i,0)}}}\beta _{0,h}\Sigma _{hj.b_{\vgamma%
^{(i,0)}}}\beta _{0,j}}\right)
\end{equation*}
where $a_{\vgamma}$ is the set of indices of the
regressors in the $\vgamma$ specification, and $b_{\vgamma}:=a_{\vgamma}\backslash \mathbb{T}$
includes the indices of regressors included in the $\vgamma$ specification except the ones that belong to the
DGP.
\end{lemma}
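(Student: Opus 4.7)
The plan is to combine Lemma \ref{Lemma1} and Lemma \ref{Lemma2} via term-by-term application of the continuous mapping theorem and Slutsky's theorem. By Lemma \ref{Lemma1}, $\sigma_{Ti}^{2}$ equals the $\Gamma_{-i}$-average of $(q_{-i}(1) - q_{-i}(0))^{2}$ weighted by $1/(4\cdot 2^{p-1})$, and for $q$ equal to BIC each difference inside the square obeys $q_{-i}(1) - q_{-i}(0) = \log(\widehat{\sigma}^{2}_{\vgamma^{(i,1)}}/\widehat{\sigma}^{2}_{\vgamma^{(i,0)}}) + o(1)$. The task therefore reduces to identifying, for each fixed $\vgamma_{-i} \in \Gamma_{-i}$, the probability limit of the ratio $\widehat{\sigma}^{2}_{\vgamma^{(i,1)}}/\widehat{\sigma}^{2}_{\vgamma^{(i,0)}}$.

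Next, I would apply Lemma \ref{Lemma2} separately to numerator and denominator. Each $\widehat{\sigma}^{2}_{\vgamma}$ converges in probability to $\sigma^{2} + \sum_{h,j \in \mathbb{T}\backslash a_{\vgamma}} \beta_{0,h}\Sigma_{hj.b_{\vgamma}}\beta_{0,j}$, a strictly positive (bounded below by $\sigma^{2}>0$) and finite quantity, since the quadratic form built from partial covariances of a non-degenerate covariance matrix is positive semi-definite. Because both limits lie in $(0,\infty)$, the continuous mapping theorem applied to the continuous map $(u,v)\mapsto \log(u/v)$ yields $\log(\widehat{\sigma}^{2}_{\vgamma^{(i,1)}}/\widehat{\sigma}^{2}_{\vgamma^{(i,0)}}) \pto \log$ of the ratio of limits, and Slutsky's theorem absorbs the $o(1)$ remainder.

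Finally, the outer sum has fixed finite cardinality $2^{p-1}$ independent of $n$, so term-by-term convergence in probability of the summands implies convergence of the normalised sum to the corresponding average of the limits, which is the expression displayed in Lemma \ref{Lemma2bis} (obtained by substituting the Lemma \ref{Lemma2} limits into each log-ratio).

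The main obstacle is only bookkeeping: verifying the strict positivity of the limits of $\widehat{\sigma}^{2}_{\vgamma^{(i,0)}}$ and $\widehat{\sigma}^{2}_{\vgamma^{(i,1)}}$ so that $\log$ is continuous at the limit point. This is immediate from Lemma \ref{Lemma2}, since each limit is $\sigma^{2}>0$ plus a non-negative quadratic form. All other steps are routine applications of the continuous mapping theorem, Slutsky's theorem, and the linearity of probability limits over finite sums.
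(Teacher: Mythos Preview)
Your approach is correct and coincides with the paper's own proof, which is the single line ``Apply Lemma \ref{Lemma1} and Lemma \ref{Lemma2}''; you have simply made the continuous-mapping and finite-sum arguments explicit. One caveat worth flagging: carrying your argument through, the squared differences $(q_{-i}(1)-q_{-i}(0))^{2}$ from Lemma~\ref{Lemma1} produce $\log^{2}(\cdot)$ in the limit rather than $\log(\cdot)$ as printed in the statement---this is evidently a typo in the paper, since the proof of Lemma~\ref{Lemma3} explicitly uses $\log^{2}c$.
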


\begin{proof}
Apply Lemma \ref{Lemma1} and \ref{Lemma2}.
\end{proof}

Lemma \ref{Lemma2bis} shows that the limit behavior of $\sigma _{Ti}^{2}$ depends on the covariance structure $\mSigma$.
Some covariance structures imply that, in the limit, the value of $S _{T}$ for true regressors is greater than the value of
$S _{T}$ for irrelevant regressors.
There also exist other covariance structures which can imply a reverse ordering.\footnote{%
Worked out examples illustrating both situations are available from the authors upon request.}
In the special case when true and irrelevant regressors are uncorrelated, the next Lemma \ref{Lemma3} shows that
$S_{T}$ converges to 0 for irrelevant regressors, while $S_{T}$ converges to a positive constant for true regressors.
This result is then used in Theorem \ref{Th_4} to show that the ordering based on $S_{T}$ separates true and irrelevant
regressors in this special case.

\begin{lemma}[Orthogonal regressors in $\mathbb{M}$ and $\mathbb{T}$]
\label{Lemma3}Assume that $\Sigma _{\ell j}=0$ for all $j\in \mathbb{T}$ and
$\ell \in \mathbb{M}$. \ Then when $i\in \mathbb{M}$ one has, as $%
n\rightarrow \infty $, $\mathbb{\sigma }_{Ti}^{2}\overset{p}{\rightarrow }0$%
, whereas otherwise when $i\in \mathbb{T}$ one finds%
\begin{equation}
\mathbb{\sigma }_{Ti}^{2}\overset{p}{\rightarrow }c_{i}>0.  \label{eq_ci}
\end{equation}
\end{lemma}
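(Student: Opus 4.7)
The plan is to apply Lemma \ref{Lemma2bis} (or equivalently Lemmas \ref{Lemma1} and \ref{Lemma2}, which give the representation with squared log-ratios) and then exploit the orthogonality hypothesis to simplify the partial covariances entering each ratio. The decisive observation is that for any $h,j\in \mathbb{T}$ and any $b\subseteq \mathbb{M}$, the hypothesis $\Sigma _{\ell j}=0$ for $\ell\in\mathbb{M}$, $j\in\mathbb{T}$ gives $\Sigma _{hb}=0$ and $\Sigma _{bj}=0$, so the Schur correction in $\Sigma _{hj.b}=\Sigma _{hj}-\Sigma _{hb}\Sigma _{bb}^{-1}\Sigma _{bj}$ vanishes and $\Sigma _{hj.b}=\Sigma _{hj}$. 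Because $b_{\vgamma}\subseteq \mathbb{M}$ by definition, the quadratic form inside each ratio in Lemma \ref{Lemma2bis} collapses to
\begin{equation*}
Q(\vgamma):=\sigma ^{2}+\sum_{h,j\in \mathbb{T}\setminus a_{\vgamma}}\beta _{0,h}\Sigma _{hj}\beta _{0,j},
\end{equation*}
a function of $\vgamma$ through the set $\mathbb{T}\setminus a_{\vgamma}$ only.

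For $i\in\mathbb{M}$, flipping the $i$-th coordinate inserts or removes an index of $\mathbb{M}$ in $a_{\vgamma}$ and leaves $\mathbb{T}\setminus a_{\vgamma}$ unchanged; hence $Q(\vgamma^{(i,1)})=Q(\vgamma^{(i,0)})$ for every $\vgamma_{-i}\in\Gamma_{-i}$, every log-ratio vanishes, and one concludes $\mathbb{\sigma} _{Ti}^{2}\pto 0$.

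For $i\in\mathbb{T}$ the summands are non-negative squares, so to obtain $c_{i}>0$ it suffices to exhibit one configuration $\vgamma_{-i}$ making $Q(\vgamma^{(i,1)})\neq Q(\vgamma^{(i,0)})$. I would pick the specific $\vgamma_{-i}$ with $\gamma _{j}=1$ for all $j\in \mathbb{T}\setminus \{i\}$ and $\gamma _{j}=0$ otherwise. Then $a_{\vgamma^{(i,1)}}=\mathbb{T}$ gives $Q(\vgamma^{(i,1)})=\sigma ^{2}$, while $a_{\vgamma^{(i,0)}}=\mathbb{T}\setminus \{i\}$ gives $Q(\vgamma^{(i,0)})=\sigma ^{2}+\beta _{0,i}^{2}\Sigma _{ii}$; since $\beta _{0,i}\neq 0$ because $i\in \mathbb{T}$, and $\Sigma _{ii}>0$ by positive definiteness of $\mSigma _{\vx\vx}$, the two values differ and the corresponding squared log is strictly positive. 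Together with non-negativity of the remaining summands, this yields $\mathbb{\sigma} _{Ti}^{2}\pto c_{i}>0$.

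The step I expect to be least automatic is the strict positivity in the $i\in\mathbb{T}$ case, since the Lemma \ref{Lemma2bis} expression conditioned on $b_{\vgamma}$ does not offer an obvious general monotonicity of $Q$ in $a_{\vgamma}$; this is circumvented cleanly by the explicit choice of $\vgamma_{-i}$ above, which uses only $\beta _{0,i}\neq 0$ and $\Sigma _{ii}>0$, and avoids having to compare $Q(\vgamma^{(i,1)})$ and $Q(\vgamma^{(i,0)})$ for every $\vgamma_{-i}$.
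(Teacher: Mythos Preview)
Your proof is correct and, for the $i\in\mathbb{M}$ case, essentially identical to the paper's: both reduce $\Sigma_{hj.b_{\vgamma}}$ to $\Sigma_{hj}$ via the orthogonality hypothesis (since $b_{\vgamma}\subseteq\mathbb{M}$), then observe that flipping an index in $\mathbb{M}$ leaves $\mathbb{T}\setminus a_{\vgamma}$ unchanged, so numerator and denominator coincide.

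For the $i\in\mathbb{T}$ case your argument is actually cleaner than the paper's. The paper's printed proof for this part opens with ``Assume that $\Sigma_{\ell j}\neq 0$ for some $j\in\mathbb{T}$ and $\ell\in\mathbb{M}$'' --- a condition that contradicts the lemma's hypothesis --- and then argues via $\Sigma_{hj.b_{\vgamma^{(i,1)}}}\neq\Sigma_{hj.b_{\vgamma^{(i,0)}}}$, which cannot be the operative mechanism when $i\in\mathbb{T}$ because $b_{\vgamma}=a_{\vgamma}\setminus\mathbb{T}$ is unaffected by toggling an index in $\mathbb{T}$. Your explicit choice $\vgamma_{-i}$ with $\gamma_{j}=1$ exactly on $\mathbb{T}\setminus\{i\}$ sidesteps all of this: it isolates a single term with $Q(\vgamma^{(i,1)})=\sigma^{2}$ versus $Q(\vgamma^{(i,0)})=\sigma^{2}+\beta_{0,i}^{2}\Sigma_{ii}$, and the conclusion follows from $\beta_{0,i}\neq 0$, $\Sigma_{ii}>0$, and non-negativity of the remaining squared-log summands. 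This is the right way to secure strict positivity and is what the paper's argument should have said.
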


\begin{proof}
From Lemma \ref{Lemma2}, one finds%
\begin{equation}
q_{-i}\left( 1\right) -q_{-i}\left( 0\right) =\log \left( \frac{\sigma
^{2}+\sum_{h,j\in \mathbb{T}\backslash a_{\vgamma^{(i,1)}}}\beta
_{0,h}\Sigma _{hj.b_{\vgamma^{(i,1)}}}\beta _{0,j}}{\sigma ^{2}+\sum_{h,j\in
\mathbb{T}\backslash a_{\vgamma^{(i,0)}}}\beta _{0,h}\Sigma _{hj.b_{\vgamma%
^{(i,0)}}}\beta _{0,j}}\right) +o_{p}(1),  \label{eq_qidiff}
\end{equation}%
Assume that $\Sigma _{\ell j}\neq 0$ for some $j\in \mathbb{T}$ and $\ell
\in \mathbb{M}$; then for some $\vgamma_{-i}\in \Gamma _{-i}$ one has $%
\Sigma _{hj.b_{\vgamma^{(i,1)}}}\neq \Sigma _{hj.b_{\vgamma^{(i,0)}}}$ in
the numerator and denominator on the r.h.s. of (\ref{eq_qidiff});\ let $%
c\neq 1$ indicate the corresponding ratio. Hence $\left( q_{-i}\left(
1\right) -q_{-i}\left( 0\right) \right) ^{2}$ converges in probability to $%
\log ^{2}c>0$, and because the terms in $\ \mathbb{E}_{\vgamma_{-i}}\left(
\mathbb{V}_{\gamma _{i}}\left( q\mid \vgamma_{-i}\right) \right) =\frac{1}{%
4\cdot 2^{p-1}}\sum_{\vgamma_{-i}\in \Gamma _{-i}}\left( q_{-i}\left(
1\right) -q_{-i}\left( 0\right) \right) ^{2}$, see Lemma \ref{Lemma2bis}, are
non-negative, one concludes that $\mathbb{\sigma }_{Ti}^{2}\overset{p}{%
\rightarrow }c_{i}>0$.

Assume instead that $\Sigma _{\ell j}=0$ for all $j\in \mathbb{T}$ and $\ell
\in \mathbb{M}$ and $i\in \mathbb{M}$. Then $\mathbb{T}\backslash a_{\vgamma%
^{(i,\cdot )}}=\mathbb{T}\backslash a_{\vgamma_{-i}}$ and, because $\Sigma
_{\ell j}=0$ for all $j\in \mathbb{T}$ and $\ell \in \mathbb{M}$, one has $%
\Sigma _{jb_{\vgamma^{(i,\cdot )}}}=0$. This implies $\Sigma _{hj.b_{\vgamma%
^{(i,\cdot )}}}:=\Sigma _{hj}-\Sigma _{hb_{\vgamma^{(i,\cdot )}}}\Sigma _{b_{%
\vgamma^{(i,\cdot )}}b_{\vgamma^{(i,\cdot )}}}^{-1}\Sigma _{b_{\vgamma%
^{(i,\cdot )}}j}=\Sigma _{hj}$. Hence%
\begin{equation*}
q_{-i}\left( 1\right) -q_{-i}\left( 0\right) =\log \left( \frac{\sigma
^{2}+\sum_{h,j\in \mathbb{T}\backslash a_{\vgamma_{-i}}}\beta _{0,h}\Sigma
_{hj}\beta _{0,j}}{\sigma ^{2}+\sum_{h,j\in \mathbb{T}\backslash a_{\vgamma%
_{-i}}}\beta _{0,h}\Sigma _{hj}\beta _{0,j}}\right) +o_{p}(1)=o_{p}(1),
\end{equation*}
for all $\vgamma_{-i}\in \Gamma _{-i}$ because the numerator and denominator
are identical. Thus $\left( q_{-i}\left( 1\right) -q_{-i}\left( 0\right)
\right) ^{2}$ converges in probability to $\log ^{2}1=0$ for all $\vgamma%
_{-i}\in \Gamma _{-i}$, and this implies $\mathbb{\sigma }_{Ti}^{2}\overset{p%
}{\rightarrow }0$.
\end{proof}

The following theorem shows that for large samples, a scree plot on the
ordered $S_{Ti}$ allows to separate the relevant regressors from the
irrelevant ones when true and irrelevant regressors are uncorrelated.

\begin{theorem}[Ordering based on $S_{Ti}$ works for orthogonal regressors
in $\mathbb{M}$ and $\mathbb{T}$]
\label{Th_4}Assume $\Sigma _{\ell j}=0$ for all $j\in \mathbb{T}$ and $\ell
\in \mathbb{M}$ as in Lemma $\ref{Lemma3}$. Define $(S_{T(1)},S_{T(2)},\dots
,S_{T(p)})$ as the set of $S_{Ti}$ values in decreasing order, with $%
S_{T(1)}\geq S_{T(2)}\geq \dots \geq S_{T(p)}$. Then as $n\rightarrow \infty
$ one has
\begin{equation*}
(S_{T(1)},S_{T(2)},\dots ,S_{T(p)})\overset{p}{\rightarrow }%
(c_{(1)},c_{(2)},\dots ,c_{(r_{0})},0,\dots 0)
\end{equation*}%
where $(c_{(1)},c_{(2)},\dots ,c_{(r_{0})})$ is the set of $c_{i}$ values
defined in $(\ref{eq_ci})$ in decreasing order. Hence the ordered $S_{Ti}$
values separate the block of true regressors in $\mathbb{T}$ in the first $%
r_{0}$ positions and the irrelevant ones in the last $p-r_{0}$ ones.
\end{theorem}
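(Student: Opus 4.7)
The plan is to reduce the theorem to Lemma \ref{Lemma3} together with a convergence result for the denominator $V$ and a routine invocation of the continuous mapping theorem applied to the sorting map. Since $S_{Ti} = \sigma_{Ti}^2/V$, the claim follows once both numerator and denominator have a probability limit and the ordering operation is continuous.

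First, I would show that $V \overset{p}{\longrightarrow} V_{\infty}$ for some non-negative constant $V_{\infty}$. Because $P$ is uniform on the finite set $\Gamma$, we can write
\begin{equation*}
V = \mathbb{V}(q) = \frac{1}{2^{p}}\sum_{\vgamma \in \Gamma} q(\vgamma)^{2} - \left(\frac{1}{2^{p}}\sum_{\vgamma\in \Gamma} q(\vgamma)\right)^{2},
\end{equation*}
so $V$ is a continuous function of the finite-dimensional vector $(q(\vgamma))_{\vgamma \in \Gamma}$. By Lemma \ref{Lemma2} together with $q(\vgamma)=\log \widehat{\sigma}_{\vgamma}^{2}+k_{\vgamma}c_{n}$ and $c_{n}\to 0$, each coordinate $q(\vgamma)$ converges in probability to $\log\bigl(\sigma^{2}+\sum_{h,j\in \mathbb{T}\setminus a_{\vgamma}}\beta_{0,h}\Sigma_{hj.b_{\vgamma}}\beta_{0,j}\bigr)$, so joint convergence and the continuous mapping theorem yield $V \overset{p}{\longrightarrow} V_{\infty}$, with $V_{\infty}$ equal to the variance of these limits under the uniform law on $\Gamma$.

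Second, I would combine this with Lemma \ref{Lemma3}, which, under the orthogonality assumption $\Sigma_{\ell j}=0$ for $\ell \in \mathbb{M}$, $j\in \mathbb{T}$, gives $\sigma_{Ti}^{2}\overset{p}{\longrightarrow} 0$ for $i\in \mathbb{M}$ and $\sigma_{Ti}^{2}\overset{p}{\longrightarrow} c_{i}>0$ for $i\in \mathbb{T}$. Applying the continuous mapping theorem to the ratio, $S_{Ti}=\sigma_{Ti}^{2}/V$ converges in probability to $0$ for $i\in \mathbb{M}$ and to $c_{i}/V_{\infty}$ for $i\in \mathbb{T}$ (interpreting the $c_{(j)}$ in the statement as these normalized limits, as $V$ is a common positive factor). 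The sorting map $\SR^{p}\to \SR^{p}$ that returns coordinates in non-increasing order is continuous, so a final application of the continuous mapping theorem delivers the joint convergence of $(S_{T(1)},\dots,S_{T(p)})$ to the sorted vector, in which the first $r_{0}$ entries are the positive constants $c_{(1)}/V_{\infty}\geq \dots \geq c_{(r_{0})}/V_{\infty}$ and the remaining $p-r_{0}$ entries are zero.

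The main obstacle is verifying that $V_{\infty}>0$ so that the ratio limit is well defined; this is immediate provided $r_{0}\geq 1$, because specifications that omit at least one true regressor have a strictly larger limit of $\widehat{\sigma}_{\vgamma}^{2}$ than those containing all of $\mathbb{T}$ (by Lemma \ref{Lemma2} and the positive-definiteness of the relevant partial covariance block $\Sigma_{v_{\vgamma}v_{\vgamma}.b_{\vgamma}}$), so the limiting values $q_{\infty}(\vgamma)$ are not all equal. Everything else is a bookkeeping application of the continuous mapping theorem to a finite-dimensional vector, so no additional probabilistic work is required.
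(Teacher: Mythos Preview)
Your argument is correct and follows the same route as the paper, whose entire proof reads ``Direct application of Lemma~\ref{Lemma3}.'' You have simply supplied the details the paper omits: the continuous-mapping step for the denominator $V$, the ratio, and the sorting map, together with the check that $V_{\infty}>0$. Your parenthetical remark that the limits should really be $c_{i}/V_{\infty}$ rather than the $c_{i}$ of \eqref{eq_ci} is also well taken; the paper's statement is slightly loose on this point, but since $V_{\infty}$ is a common positive scaling it does not affect the separation conclusion.
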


\begin{proof}
Direct application of Lemma $\ref{Lemma3}.$
\end{proof}

\section*{Appendix B: Effective DGP}

\label{sec:EDGP}In this appendix we describe how the notion of `weak
regressors' was made operational in the present context.
%
We employ a recently-introduced measure known
as the `Parametricness Index' (PI), \cite{liu11}, to identify
the `effective DGP'\ (EDGP).
Parametricness, in the sense of Liu and Yang, is a measure dependent both on
sample size \emph{and} the proposed model; a model is parametric if omission
of any of its variables implies a marked change in its fit, and
nonparametric otherwise.\footnote{%
For example, consider a data set generated by a sine function, with added
noise. If it is proposed to model this with a quadratic equation, the
data/model should be considered non-parametric. However, if the proposed
model included sinusoidal terms, it should be considered parametric.} Here
we take parametricness as a sign of detectability, i.e. of a sufficiently
high signal-noise ratio. We apply this concept both to complete
specifications as well as to single regressors;\ in particular we define the
EDGP as the subset of DGP regressors which the PI would classify as
parametric. Details are given in the following.

Considering a model $\vgamma_{k}\in \Gamma $, one can express the regression
fit as $\hat{\vy}_{k}=\mP_{k}\vy$, where $\mP_{k}$ is the projection matrix on $%
\col(\mX\mD_{\vgamma_{k}})$, and $\col$ indicates the column space; let $r_{%
\vgamma_{k}}$ be the dimension of $\text{col}(\mX\mD_{\vgamma_{k}})$. The
index PI is defined in terms of an information criterion $IC$, which depends
on $\lambda _{n}$, $d$ and $\hat{\sigma}^{2}$. Here $\lambda _{n}$ is a
nonnegative sequence that satisfies $\lambda _{n}\geq (\log n)^{-1}$, $d$ is
a nonnegative constant and $\hat{\sigma}^{2}$ is a consistent estimator of $%
\sigma ^{2}$ such as $||\vy-\hat{\vy}_{k}||^{2}/(n-r_{\vgamma_{k}})$ with $%
\vgamma_{k}$ consistent for $\vgamma_{0}$. In our application we used $%
\vgamma_{k}=\vgamma_{0}$. The information criterion IC is defined by
\begin{equation}
IC_{\lambda _{n},d}(\vgamma_{k},\hat{\sigma}^{2})=||\vy-\hat{\vy}%
_{k}||^{2}+\lambda _{n}\log (n)r_{k}\hat{\sigma}^{2}-n\hat{\sigma}%
^{2}+dn^{1/2}\log (n)\hat{\sigma}^{2}  \label{eq:IC}
\end{equation}%
where $||\cdot ||$ represents Euclidean distance; here we take $\lambda
_{n}=1$ and $d=0$ as suggested in \cite{liu11}.

Let now $\vgamma_{0}$ be the DGP; PI is now defined in the present context
as,
\begin{equation}
PI=\left\{
\begin{array}{cc}
\underset{\vgamma_{k}\in \Gamma _{1}(\vgamma_{0})}{\inf }~{\frac{IC_{\lambda
_{n},d}(\vgamma_{k},\hat{\sigma}^{2})}{IC_{\lambda _{n},d}(\vgamma_{0},\hat{%
\sigma}^{2})}} & \mbox{  if  }r_{{\vgamma}_{0}}>1 \\
n & \mbox{  if  }r_{\vgamma_{0}}=1%
\end{array}%
\right.  \label{eq:PIdef}
\end{equation}%
where $\Gamma _{1}(\vgamma_{0})$ is the set of submodels $\vgamma_{k}$ of
the DGP $\vgamma_{0}$ such that $r_{\vgamma_{k}}=r_{\vgamma_{0}}-1$, i.e.
all submodels obtained by removing one regressor at a time (with
replacement).\footnote{%
In the original paper $\vgamma_{0}$ is replaced by the model $\hat{\gamma}%
_{k}$ selected by a weakly consistent information criterion, such as BIC.}

The reasoning is that if the model is parametric (and correctly specified
for the data), removing any of the regressors will have a marked impact on $%
IC$. In contrast, if (some of the)\ regressors are just incremental terms in
a nonparametric approximation, removing one of these regressors will have
little effect on IC. \cite{liu11} show that PI converges to 1 for a
nonparametric scenario, and goes to infinity in a parametric scenario. The
authors suggest to take $PI=1.2$ is a cutoff point between parametric and
nonparametric scenarios; we adopt this threshold in the following.

As suggested by Liu and Yang, PI can, \textquotedblleft given the regression
function and the noise level ... indicate whether the problem is practically
parametric/nonparametric at the current sample size\textquotedblright . If
the PI value is close to or below 1, one could conclude that at least some
of the terms are `undetectable' at the given sample size, therefore it may
be unreasonable to expect an algorithm to identify the DGP correctly.

We apply PI at the level of each DGP; if PI indicates that the DGP is
nonparametric, we also investigate which of the submodel is responsible for
this and label the corresponding omitted variables as `weak'. As in the rest
of the paper, we employ a MC approach. We generated 5000 datasets from each
DGP and calculated PI for each sample, hence obtaining a distribution of PI
values. \ Table \ref{tab:PI_EDGP} summarizes the MC distribution of PI
values through the empirical distribution function $F_{m}(x)=m^{-1}%
\sum_{j=1}^{m}1(PI_{j}\leq x)$, where $m= N_R$ and $PI_{j}$ is the PI value in
replication $j=1,\dots ,N_R$. Quantiles of PI are indicated as PI$_{\alpha }$%
, with $\alpha =0.01$, $0.1$, $0.9$, $0.99$, and the MC\ mean PI is
indicated as $E_{N}$(PI), where for simplicity we drop he subscript $R$ from $N_R$.

The reference threshold is $PI=1.2$, and $F_{N}(1.2)$ shows the frequency of
PI being below this limit; in other words this gives an estimate for the DGP
to be classified as nonparametric. There is a very clear distinction: DGPs 6
and 9 are regarded as nonparametric 98\% and 100\% of the time respectively.
In contrast, all other DGPs are always regarded as parametric, with the
slight exception of DGP3, which is a little less clear cut.

Examining the quantiles, DGP3 has a mean PI value of 2.54 and $PI_{0.1}=1.53$%
, which puts it in the parametric class in the large majority of cases. DGP
6 has a mean PI of 0.53, and $PI_{0.9}=0.79$, making it almost always
nonparametric. DGP 9 has $PI_{0.99}=0.93$, making it the most obviously
nonparametric DGP. Of the remaining DGPs, all are well above the threshold
and can be safely considered parametric.

\begin{table}[tbp] \centering%
\begin{tabular}{l|l|lllllll}
\hline
DGP & DGP Indices & $F_{N}$(1.2) & PI$_{0.01}$ & PI$_{0.1}$ & $E_{N}$(PI) &
PI$_{0.9}$ & PI$_{0.99}$ & EDGP Indices \\ \hline
1 & \{\} & - & - & - & - & - & - & \{\} \\
2 & \{37\} & 0.00 & 16.55 & 25.59 & 41.80 & 60.83 & 84.61 & \{37\} \\
3 & \{37,38\} & 0.04 & 0.88 & 1.53 & 2.54 & 3.52 & 4.17 & \{37,38\} \\
4 & \{11\} & 0.00 & 30.50 & 37.82 & 49.19 & 61.78 & 74.88 & \{11\} \\
5 & \{3\} & 0.00 & 365.84 & 415.39 & 493.63 & 578.17 & 668.84 & \{3\} \\
6 & \{3,11\} & 0.98 & 0.37 & 0.38 & 0.53 & 0.79 & 1.40 & \textbf{\{11\}} \\
6A & \{3,11\} & 0.00 & 2.77 & 4.15 & 6.44 & 8.95 & 11.69 & \{3,11\} \\
6B & \{3,11\} & 0.00 & 15.11 & 18.10 & 23.04 & 28.38 & 33.72 & \{3,11\} \\
7 & \{11,29,37\} & 0.00 & 2.84 & 4.16 & 6.46 & 8.96 & 11.76 & \{11,29,37\}
\\
8 & \{3,21,37\} & 0.00 & 5.77 & 8.40 & 13.49 & 19.22 & 26.41 & \{3,21,37\}
\\
9 & \{3,11,21,29,37\} & 1.00 & 0.75 & 0.75 & 0.77 & 0.81 & 0.93 & \textbf{%
\{11,29,37\}} \\ \hline
\end{tabular}%
\caption{Distribution of PI values for DGPs 1-9. $F_{n}(\cdot )$ is the
MC cumulative distribution function of PI and $PI_{\protect\alpha }$
is the $\protect\alpha $-quantile of
$F_{m}(\cdot )$. DGPs where EDGP$\neq$DGP are in boldface.}\label%
{tab:PI_EDGP}%
\end{table}%

\begin{table}[tbp] \centering%
\begin{tabular}{l|l|llllll}
\hline
DGP & Variable & $F_{N}$(1.2) & ICR$_{0.01}$ & ICR$_{0.1}$ & $E$(ICR) & ICR$%
_{0.9}$ & ICR$_{0.99}$ \\ \hline
6 & x$_{3}$ & \textbf{0.98} & \textbf{0.37} & \textbf{0.38} & \textbf{0.53}
& \textbf{0.79} & \textbf{1.40} \\
& x$_{11}$ & 0.00 & 15.79 & 19.27 & 25.03 & 31.33 & 37.31 \\ \hline
9 & x$_{3}$ & \textbf{0.99} & \textbf{0.75} & \textbf{0.75} & \textbf{0.82}
& \textbf{0.94} & \textbf{1.20} \\
& x$_{11}$ & 0.00 & 8.44 & 9.95 & 12.56 & 15.41 & 18.36 \\
& x$_{21}$ & \textbf{0.99} & \textbf{0.75} & \textbf{0.75} & \textbf{0.81} &
\textbf{0.92} & \textbf{1.18} \\
& x$_{29}$ & 0.00 & 2.15 & 2.88 & 4.24 & 5.73 & 7.38 \\
& x$_{37}$ & 0.00 & 3.87 & 5.46 & 8.82 & 12.61 & 17.25 \\ \hline
\end{tabular}%
\caption{Distribution of ICRs for DGPs 6 and 9. Notation as in Table \protect
\ref{tab:PI_EDGP}. Variables that are excluded from the EDGP are
in boldface.}\label{tab:ICratios}%
\end{table}%

We next further investigate which regressors are causing the
nonparametricness, i.e. which regressors are `weak'. We examining the
individual IC ratios for each regressor of a given DGP, see (\ref{eq:PIdef}%
). Here we let $ICR(i)$ indicate the IC ratio between the DGP and the
submodel of the DGP where variable $i$ is removed. Table \ref{tab:ICratios}
reports the distribution of $ICR(i)$ for DGPs 6 and 9, which are the
nonparametric DGPs. One can clearly see that in DGP 6, it is $x_{3}$ that is
causing the nonparametricness, since it has a mean ICR(3) of 0.53.\ Removing
this regressor improves the information criterion given the data. The same
is true for $x_{3}$ and $x_{21}$ in DGP 9, which both have ICRs with a mean
of around 0.8. In contrast, removing any of the other regressors has a
significant impact on the quality of the model fit. In practice, therefore,
one could consider these as the weak regressors.

Therefore, in DGPs 6 and 9, the variables in boldface in Table \ref%
{tab:ICratios} are excluded from the EDGP. The EDGP are defined as the
remaining regressors in each case, see Table \ref{tab:PI_EDGP}. For
fairness, the results are presented here relative to both the DGP and the
EDGP, although it is maintained that the identification of the EDGP is a
more reasonable measure of success (a fact reflected by the results of both
algorithms, and the original work of HP).




\end{document}